\let\coloneqq\relax
\newcolumntype{x}[1]{>{\centering\arraybackslash}p{#1}}
\newtheorem{thm}{Theorem}
\newtheorem*{thm*}{Theorem}
\newtheorem*{prop*}{Proposition}
\newtheorem{lemma}[thm]{Lemma}
\newtheorem*{lemma*}{Lemma}
\newtheorem{cor}[thm]{Corollary}
\newtheorem*{cor*}{Corollary}
\newtheorem*{cj*}{Conjecture}
\newtheorem{Def}[thm]{Definition}
\newtheorem*{Def*}{Definition}
\newtheorem*{question*}{Question}
\newtheorem*{problem*}{Problem}
\def\thmhead@plain#1#2#3{%
  \thmname{#1}\thmnumber{\@ifnotempty{#1}{ }\@upn{#2}}%
  \thmnote{ {\the\thm@notefont#3}}}
\let\thmhead\thmhead@plain
\theoremstyle{definition}
\newcommand{\bb}{\begin{equation}\begin{aligned}\hspace{0pt}}
\newcommand{\bbb}{\begin{equation*}\begin{aligned}}
\newcommand{\ee}{\end{aligned}\end{equation}}
\newcommand{\eee}{\end{aligned}\end{equation*}}
\newcommand*{\coloneqq}{\mathrel{\vcenter{\baselineskip0.5ex \lineskiplimit0pt \hbox{\scriptsize.}\hbox{\scriptsize.}}} =}
\newcommand{\eqt}[1]{\stackrel{\mathclap{\scriptsize \mbox{#1}}}{=}}
\newcommand{\geqt}[1]{\stackrel{\mathclap{\scriptsize \mbox{#1}}}{\geq}}
\renewcommand{\epsilon}{\varepsilon}
\newcommand{\R}{\mathds{R}}
\DeclareMathOperator{\Tr}{Tr}
\DeclareMathAlphabet{\pazocal}{OMS}{zplm}{m}{n}
\DeclareMathOperator{\diag}{diag}
\newcommand{\lsmatrix}{\left(\begin{smallmatrix}}
\newcommand{\rsmatrix}{\end{smallmatrix}\right)}
\newcommand*\rel@kern[1]{\kern#1\dimexpr\macc@kerna}
\newcommand*\widebar[1]{%
  \begingroup
  \def\mathaccent##1##2{%
    \rel@kern{0.8}%
    \overline{\rel@kern{-0.8}\macc@nucleus\rel@kern{0.2}}%
    \rel@kern{-0.2}%
  }%
  \macc@depth\@ne
  \let\math@bgroup\@empty \let\math@egroup\macc@set@skewchar
  \mathsurround\z@ \frozen@everymath{\mathgroup\macc@group\relax}%
  \macc@set@skewchar\relax
  \let\mathaccentV\macc@nested@a
  \macc@nested@a\relax111{#1}%
  \endgroup
}
\tikzset{meter/.append style={draw, inner sep=10, rectangle, font=\vphantom{A}, minimum width=30, line width=.8, path picture={\draw[black] ([shift={(.1,.3)}]path picture bounding box.south west) to[bend left=50] ([shift={(-.1,.3)}]path picture bounding box.south east);\draw[black,-latex] ([shift={(0,.1)}]path picture bounding box.south) -- ([shift={(.3,-.1)}]path picture bounding box.north);}}}
\tikzset{roundnode/.append style={circle, draw=black, fill=gray!20, thick, minimum size=10mm}}
\tikzset{squarenode/.style={rectangle, draw=black, fill=none, thick, minimum size=10mm}}
\definecolor{Blues5seq1}{RGB}{239,243,255}
\definecolor{Blues5seq2}{RGB}{189,215,231}
\definecolor{Blues5seq3}{RGB}{107,174,214}
\definecolor{Blues5seq4}{RGB}{49,130,189}
\definecolor{Blues5seq5}{RGB}{8,81,156}
\definecolor{Greens5seq1}{RGB}{237,248,233}
\definecolor{Greens5seq2}{RGB}{186,228,179}
\definecolor{Greens5seq3}{RGB}{116,196,118}
\definecolor{Greens5seq4}{RGB}{49,163,84}
\definecolor{Greens5seq5}{RGB}{0,109,44}
\definecolor{Reds5seq1}{RGB}{254,229,217}
\definecolor{Reds5seq2}{RGB}{252,174,145}
\definecolor{Reds5seq3}{RGB}{251,106,74}
\definecolor{Reds5seq4}{RGB}{222,45,38}
\definecolor{Reds5seq5}{RGB}{165,15,21}
\begin{document}

\title{Extracting energy via bosonic Gaussian operations}

\author{Frank Ernesto Quintela Rodriguez}
\email[Email: ]{frank.quintelarodriguez@sns.it}
\affiliation{Scuola Normale Superiore, Piazza dei Cavalieri, 7, Pisa, 56126, Italy}
\author{Francesco Anna Mele}
\email[Email: ]{francesco.mele@sns.it}
\affiliation{Scuola Normale Superiore, Piazza dei Cavalieri, 7, Pisa, 56126, Italy}
\author{Salvatore Francesco Emanuele Oliviero}
\email[Email: ]{salvatore.oliviero@sns.it}
\affiliation{Scuola Normale Superiore, Piazza dei Cavalieri, 7, Pisa, 56126, Italy}
\author{Vittorio Giovannetti}
\email[Email: ]{vittorio.giovannetti@sns.it}
\affiliation{Scuola Normale Superiore, Piazza dei Cavalieri, 7, Pisa, 56126, Italy}
\author{Ludovico Lami}
\email[Email: ]{ludovico.lami@sns.it}
\affiliation{Scuola Normale Superiore, Piazza dei Cavalieri, 7, Pisa, 56126, Italy}
\author{Vasco Cavina}
\email[Email: ]{vasco.cavina@sns.it}
\affiliation{Scuola Normale Superiore, Piazza dei Cavalieri, 7, Pisa, 56126, Italy}

\begin{abstract}
Quantum thermodynamics is often formulated as a theory with constrained access to operations and resources. In this manuscript, we find a closed formula for the \emph{Gaussian ergotropy}, i.e.~the maximum energy that can be extracted from bosonic systems governed by quadratic Hamiltonians by means of Gaussian unitaries only. This formula resembles the well-known eigenvalue-based expression for the standard ergotropy, but is instead formulated using symplectic eigenvalues.
We further prove that the Gaussian ergotropy is additive, indicating that the multiple-copy scenario does not benefit from Gaussian entangling operations. Extending our analysis to the relationship between ergotropic and entropic functions, we establish bounds linking entropic measures of Gaussianity to extractable work. Finally, we generalise our framework to open systems by studying the optimal state preparation that minimises the energy output in a Gaussian channel.

\end{abstract}

\maketitle
\tableofcontents

\section{Introduction}
Continuous variable systems~\cite{HOLEVO-CHANNELS-2,BUCCO,weedbrook12,CERF,adesso14,BARNETT-RADMORE,biblioparis}, such as quantum optical platforms, have been extensively analysed in all branches of quantum information theory, including quantum computing~\cite{Lloyd1999,Gottesman2001,Menicucci2006,Mirrahimi_2014,Ofek_nature2016,error_corr_boson,Guillaud_2019,alexander2024manufacturable,brenner2024factoringintegeroscillatorsqubit,ulysse_q_com,ulysse2,ulysse3,ulysse_5,upreti2025interplayresourcesuniversalcontinuousvariable}, quantum communication~\cite{Caves,Ralph1999,Wolf2007,TGW,PLOB,MMMM,Die-Hard-2-PRL,mele2023maximum, mele2024quantum,Die-Hard-2-PRA,mele2023optical,mele2025achievableratesnonasymptoticbosonic,mele2025nonasymptoticquantumcommunicationlossy,Sidhu,Winnel,Noh2019,Pirandola2021,Grosshans02}, quantum sensing~\cite{SGravi2,noon_pap,meyeretal,q_sensing_cv,ohliger2012continuousvariablequantumcompressedsensing,Joo2011,Facon2016,Lee2015}, and quantum learning theory~\cite{mele2024learningquantumstatescontinuous,bittel2025optimalestimatestracedistance,becker_classical_2023,gandhari2023precisionboundscontinuousvariablestate,fanizza2024efficientHamiltonianstructuretrace,fawzi2024optimalfidelityestimationbinary,oh2024entanglementenabled,wu2024efficient}. Moreover, these systems have gained significant attention in the pursuit of demonstrating a quantum advantage, particularly through boson sampling~\cite{Borealis,Zhong_2020,SupremacyReview,ulysse_4} and quantum simulation experiments~\cite{QuantumPhotoThermodynamics}. Among all continuous-variable systems, \emph{Gaussian systems}~\cite{BUCCO} stand out as some of the most significant. This is due to two key factors: first, Gaussian systems are ubiquitous in nature and in quantum optics laboratories; second, they offer a relatively straightforward mathematical framework for analysis. 
In this work, we solve the problem of how to optimally extract energy by means of Gaussian operations only, contributing to the growing literature on quantum thermodynamics with continuous-variable systems~\cite{Brown_2016,PhysRevA.101.062326,PhysRevLett.109.190502,medina2024anomalousdischargingquantumbatteries,annurev:/content/journals/10.1146/annurev-physchem-040513-103724,PhysRevE.103.012111,PhysRevResearch.5.L032010,Narasimhachar2021,PhysRevLett.124.010602,Williamson_2025,cavina2024quantum}.

To quantify the accessibility of the energy stored in a quantum state, we can use the \emph{ergotropy}~\cite{Allahverdyan_2004}, which for a state ${\rho}$ with Hamiltonian ${\hat H}$ is defined as
\begin{align} \label{eq:erg}
    & \mathcal E^{(\hat{H})} (\rho) \coloneqq E(\rho) - \min_{U} \Tr[\hat H U \rho U^\dagger] =  E (\rho) - E(\rho^{\downarrow})\,,
\end{align}
where, for an arbitrary state $\rho$, we denoted with $E(\rho) \coloneqq \Tr[\hat{H} \rho]$ its mean energy, and with $\rho^{\downarrow}$ the {\it passive state} associated to $\rho$, i.e., the lowest energy state that can be reached from $\rho$ through unitary transformations.
If we assume to have $n$ copies of the system, all prepared in the same state $\rho$, access to global entangling unitaries enhances the work extraction process. In this $n$-copy scenario, the total Hamiltonian $\hat{H}_{\mathrm{tot}}^{(n)}$ is given by the sum of the single-copy Hamiltonians, i.e.
\bb\label{eqhtot}
\hat{H}_{\mathrm{tot}}^{(n)}\coloneqq \sum_{i=1}^n\hat{H}_i\,,
\ee
where $\hat{H}_i$ acts as $\hat{H}$ on the $i$th system and acts trivially on the other systems~\cite{Lenard1978}. This leads to the definition of the {\it total ergotropy}~\cite{alicki2013entanglement, PhysRevLett.127.210601}
\begin{align} \label{m_28}
    & \mathcal E^{(\hat{H})}_\text{tot} (\rho) \coloneqq \lim_{n\to \infty} \frac{1}{n} \mathcal{E}^{(\hat{H}_{\mathrm{tot}}^{(n)})} (\rho^{\otimes n}) 
    = E(\rho) - E (\tau_{\beta^{*}}), 
\end{align}
where ${\tau_{\beta^{\ast}}}\coloneqq \frac{e^{-\beta^{\ast}\hat{H}}}{\Tr [e^{-\beta^\ast\hat{H}}]}$ is a Gibbs state with Hamiltonian $\hat{H}$ whose inverse temperature ${\beta^{*} \in \mathbb R^+}$, from now on referred to as the \textit{intrinsic inverse temperature}, is chosen to satisfy ${S (\tau_{\beta^{*}}) = S(\rho) \coloneqq -\Tr[\rho \ln \rho] }$.
\Cref{eq:erg,m_28} are genuine measures of extractable work, assuming that the work-extraction device has access to all possible unitary transformations. This freedom is typically not allowed in continuous variable systems~\cite{BUCCO}. In such systems a physically meaningful set of operations is the set of {\it Gaussian unitaries} $\mathcal{G}$. These correspond to the compositions of unitary transformations generated by quadratic Hamiltonians~\cite{BUCCO}.
In a setting where practical limitations restrict us to such transformations, it is natural to define a new measure of extractable work, the \textit{Gaussian ergotropy}:
\bb \label{m_29}
    & \mathcal E^{(\hat{H})}_G (\rho) \coloneqq E(\rho) - \min_{U \in \mathcal G} \Tr[\hat{H} U \rho U^\dagger]\,,
\ee
where the minimum is performed on the set of Gaussian unitaries $\mathcal{G}$. Moreover, throughout the rest of the paper, the Hamiltonian $\hat{H}$ is always assumed to be quadratic. Equivalently, the Gaussian ergotropy can be expressed as
\bb
    \mathcal E^{(\hat{H})}_G (\rho) =E (\rho) - E(\rho^{\downarrow}_G)\,,
\ee
where $\rho^{\downarrow}_G$ denotes the {\it Gaussian-passive state}~\cite{Brown_2016} associated to $\rho$ that, in analogy with its counterpart in~\cref{eq:erg}, corresponds to 
\begin{equation}
 \rho_G^{\downarrow} \coloneqq \bar{U} \rho \bar{U}^{\dag}\, \: \, {\rm such \, that \, \: \,} \bar{U} = \arg \min_{U\in \mathcal G}\Tr[\hat{H} U\rho U^{\dag}].
\end{equation}
Observe that in the above definitions the state \( \rho \) is not necessarily Gaussian. Moreover, if the state evolves under a quadratic Hamiltonian, the energy of the Gaussian-passive state remains constant, since the evolution is a Gaussian unitary that can be absorbed into the optimisation over $U \in \mathcal{G}$.
If this Hamiltonian coincides with $\hat{H}$, that is, the one appearing in~\cref{m_29}, the energy of $\rho$ also remains constant, ensuring that the ergotropy is a conserved quantity.

As in the standard case, we can extend the analysis to the multi-copy scenario and evaluate the advantage gained from entangling operations performed across multiple copies.
This leads to the definition of {\it Gaussian total ergotropy}:
\begin{align} \label{m_28.1}
    & \mathcal E^{(\hat{H})}_\text{G,tot} (\rho) \coloneqq \lim_{n\to \infty} \frac{1}{n} \mathcal E^{(\hat{H}_{\mathrm{tot}}^{(n)})}_G (\rho^{\otimes n})\,,
\end{align}
where $\hat{H}_{\mathrm{tot}}^{(n)}$ is defined as in~\cref{eqhtot}. 
Using these definitions, we can immediately derive some straightforward yet general bounds.
 Since the algebra of quadratic operators does not generate the whole set of unitaries, the Gaussian ergotropy is always smaller than the ergotropy. 
 We can thus define the {\it non-Gaussian work potential} as
 \begin{align} \label{eq:NGWP}
  \Delta^{(\hat{H})}(\rho) \coloneqq  \mathcal{E}^{(\hat{H})}(\rho) - \mathcal{E}^{(\hat{H})}_G(\rho) = E(\rho^{\downarrow}_G) - E(\rho^{\downarrow}) \geq 0. 
 \end{align}
Note that also the gap between the total ergotropy and the ergotropy is always positive and is called {\it bound ergotropy}~\cite{Niedenzu2019conceptsofworkin}
\bb\label{eq:bounderg}
 \mathcal{B}^{(\hat H)}(\rho) \coloneqq  \mathcal{E}^{(\hat{H})}_{\mathrm{tot}}(\rho) - \mathcal{E}^{(\hat{H})}(\rho) =  \frac{1}{\beta^*} D(\rho^{\downarrow}||\tau_{\beta^*}) \geq 0,
\ee
where $D(\rho||\sigma)\coloneqq \Tr[\rho (\ln \rho - \ln \sigma)]$ is the quantum relative entropy.
The last quantity is precisely the non-equilibrium free energy of the state $\rho^{\downarrow}$ with respect to an equilibrium temperature given by the intrinsic inverse temperature $\beta^*$.
In the scenario where multiple copies are available, one could explore the performance of local versus entangling Gaussian operations, as well as compare the effectiveness of entangling Gaussian operations with that of general entangling unitaries.
We thus introduce the {\it Gaussian bound ergotropy} $ \mathcal{B}_G$ and {\it total non-Gaussian work potential}  $ \Delta_{\mathrm{tot}} $ as 

\begin{align} \label{eq:2def}
 &  \mathcal{B}_G^{(\hat H)}(\rho) \coloneqq  \mathcal{E}^{(\hat{H})}_\text{G,tot}(\rho) - \mathcal{E}^{(\hat{H})}_G(\rho) \geq 0 ; \quad   \quad 
\Delta^{(\hat{H})}_{\mathrm{tot}}(\rho) \coloneqq    \mathcal{E}^{(\hat{H})}_{\mathrm{tot}}(\rho) - \mathcal{E}^{(\hat{H})}_\text{G,tot}(\rho) \geq 0.
\end{align}

Our main result is a simple, closed formula for the Gaussian ergotropy associated with quadratic Hamiltonians. This formula elegantly mirrors the known formula for the standard ergotropy, by replacing (roughly speaking) the eigenvalues with the \emph{symplectic} eigenvalues~\cite{BUCCO}. Specifically, the energy of the passive state associated with a state $\rho$ and a Hamiltonian $\hat{H}$ is known to be~\cite{Allahverdyan_2004}
\bb 
    \min_{U}\Tr\left[\hat{H} U\rho U^\dagger \right]=\sum_{i}\lambda_i^\uparrow(\hat{H})\lambda_i^\downarrow(\rho)\,,
\ee 
where the minimisation is over the set of all unitaries, $\{\lambda_i^\uparrow(\hat{H})\}_i$ are the eigenvalues of $\hat{H}$ ordered in increasing order, and $\{\lambda_i^\downarrow(\rho)\}_i$ are the eigenvalues of $\rho$ ordered in decreasing order. Remarkably, our main result (theorem \ref{v1_010}) establishes that the energy of the \emph{Gaussian}-passive state associated with a state $\rho$ and a quadratic Hamiltonian $\hat{H}$ is given by
\bb  \label{eq:result}
    \min_{U\in\mathcal{G}}\Tr\left[\hat{H} U\rho U^\dagger \right]=\frac12\sum_{i=1}^n d_i^\uparrow(h)d_i^\downarrow(V(\rho))\,,
\ee 
where the minimisation is restricted to the set of all Gaussian unitaries, $\{d_i^\uparrow(h)\}_i$ are the symplectic eigenvalues of the Hamiltonian matrix $h\in \mathbb R^{2 n \times 2 n}$ (see \cref{eqh}) associated with the quadratic Hamiltonian $\hat{H}$ ordered in increasing order, and $\{d_i^\downarrow(V(\rho))\}_i$ are the symplectic eigenvalues of the covariance matrix $V(\rho)$ of the state $\rho$ ordered in decreasing order.

The paper is structured as follows. In~\cref{sec:preliminaries}, we provide an overview of the theoretical tools for continuous variable systems that will be used throughout the text. In~\cref{sec:bigproof} we prove the result in Eq. \eqref{eq:result} and find the form of the optimal unitary appearing in~\cref{m_29}.
In~\cref{sec:additivity} we prove that the Gaussian ergotropy is additive. This makes the Gaussian bound ergotropy identically equal to $0$ and establishes a relation between the non-Gaussian work potential and its total version.
In~\cref{sec:entropy} we prove several bounds between ergotropic and entropic Gaussian functionals, valid for quadratic Hamiltonians.
In~\cref{sec:channels} we characterise the minimum energy at the output of a Gaussian channel. Finally, in~\cref{sec:conclusions} we draw our conclusions.

\section{Preliminaries on continuous variable systems}
\label{sec:preliminaries}
In this section, we review the relevant preliminaries regarding continuous-variable systems; for more details, we refer to~\cite{BUCCO}. A continuous variable system is a quantum system associated with the Hilbert space $L^2(\mathbb{R}^n)$ of all square-integrable complex-valued functions on $\R^n$, where $n$ denotes the number of modes~\cite{BUCCO}. On such a Hilbert space, one can define the \emph{quadrature operator vector} $\hat{\textbf{R}}\coloneqq (\hat{x}_1,\hat{p}_1,\ldots,\hat{x}_n,\hat{p}_n)$, where $\hat{x}_i$ and $\hat{p}_i$ denote the well-known position and momentum operator of the $i$th mode. The quadrature operator vector satisfies the canonical commutation relation 
\bb
[\hat{\textbf{R}},\hat{\textbf{R}}^{\intercal}]=i\,\Omega\,\mathbb{\hat{1}}\,,
\ee 
where 
\bb\label{symp_form}
        \Omega = \bigoplus_{i=1}^{n} 
\begin{pmatrix}
    0 & 1 \\ 
    -1 & 0
\end{pmatrix}
\ee
is the so-called symplectic form.

Let us define the concept of quadratic Hamiltonian, which plays a crucial role in Gaussian quantum information~\cite{BUCCO}.
\begin{Def}[(Quadratic Hamiltonian)]\label{def_quad_ham}
    An $n$-mode Hamiltonian $\hat{H}$ is said to be quadratic if it is a quadratic polynomial in the quadrature operator vector $\hat{\textbf{R}}$.
\end{Def}

\begin{Def}[(Gaussian unitary)]
    An $n$-mode unitary is said to be Gaussian if it can be written as a composition of unitaries generated by quadratic Hamiltonians. 
\end{Def}
It can be shown that the most general Gaussian unitary $G$ is of the form~\cite{BUCCO}
\bb
    G=\hat{D}_{\textbf{r}}U_S\,,
\ee
where $\hat{D}_{\textbf{r}}$ is the \emph{displacement operator}~\cite{BUCCO} associated with the amplitude $\textbf{r}\in\mathbb{R}^{2n}$ and $U_S$ is the \emph{symplectic Gaussian unitary}~\cite{BUCCO} associated with the \emph{symplectic matrix} $S$. 
\begin{Def}[(Symplectic matrix)]
    A matrix $S\in\mathbb{R}^{2n\times 2n}$ is said to be symplectic if it satisfies $S \Omega S^\intercal =\Omega$, where $\Omega$ is the symplectic form defined in~\cref{symp_form}. The set of all symplectic matrices is denoted as $\mathrm{Sp}(2n)$.
\end{Def}
Given a symplectic matrix $S$, the symplectic Gaussian unitary $U_S$ acts on the quadrature operator vector as
\bb
    U_S^\dagger \hat{\textbf{R}}U_S=S\hat{\textbf{R}}\,.
\ee
Moreover, given a vector $\textbf{r}\in\mathbb{R}^n$, the displacement operator $\hat{D}_{\textbf{r}}$ acts on the quadrature operator vector as
\bb
    \hat{D}_{\textbf{r}}^\dagger \hat{\textbf{R}}\hat{D}_{\textbf{r}}=\hat{\textbf{R}}+\textbf{r}\hat{\mathbb{1}}\,.
\ee
Let us state the following result, known as \emph{Williamson's decomposition}.
\begin{thm}[(Williamson's decomposition)] \label{v1_03}
Any strictly positive matrix $h\in\mathbb{R}^{2n\times 2n}$ can be written as
\bb\label{def_will}
    h=SDS^\intercal\,,
\ee
where $S$ is a symplectic matrix and $D$ is a diagonal matrix of the form \bb
D=\bigoplus_{j=1}^n  \left(\begin{matrix}d_j&0\\0&d_j\end{matrix}\right)\,,
\ee
where $d_1,d_2,\ldots , d_n\ge 0$ are called the symplectic eigenvalues of $h$.
\end{thm}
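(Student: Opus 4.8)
The plan is to reduce Williamson's decomposition to the canonical block-diagonal normal form for real skew-symmetric matrices, which is a standard consequence of the real spectral theorem. Since $h$ is strictly positive, it admits a unique strictly positive (hence symmetric and invertible) square root $h^{1/2}$. First I would form the matrix $K \coloneqq h^{1/2}\Omega h^{1/2}$ and observe two things: it is real and skew-symmetric, because $K^\intercal = h^{1/2}\Omega^\intercal h^{1/2} = -h^{1/2}\Omega h^{1/2} = -K$ (using $\Omega^\intercal = -\Omega$), and it is invertible, since both $h^{1/2}$ and $\Omega$ are.

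Next I would invoke the normal form for real skew-symmetric matrices: there exist an orthogonal matrix $O$ and strictly positive numbers $d_1,\ldots,d_n > 0$ such that $K = O\Lambda O^\intercal$, where $\Lambda = \bigoplus_{j=1}^n d_j \left(\begin{smallmatrix} 0 & 1 \\ -1 & 0 \end{smallmatrix}\right)$; strict positivity of the $d_j$ is forced by the invertibility of $K$. The key bookkeeping step is to factor $\Lambda = N\Omega N$ with the positive diagonal matrix $N \coloneqq \bigoplus_{j} \sqrt{d_j}\, I_2$, so that $K = ON\Omega NO^\intercal$, where each scalar $\sqrt{d_j}$ is distributed symmetrically across the two quadratures of mode $j$.

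Then I would define $S \coloneqq N^{-1}O^\intercal h^{1/2}$ and verify that it is symplectic by the direct computation $S\Omega S^\intercal = N^{-1}O^\intercal K O N^{-1} = N^{-1}\Lambda N^{-1} = \Omega$, where I used $O^\intercal K O = \Lambda$ and $\Lambda = N\Omega N$. Finally, inverting the defining relation gives $h^{1/2} = ONS$, and transposing this (recall $h^{1/2}$ is symmetric) gives $h^{1/2} = S^\intercal N O^\intercal$; multiplying the two expressions yields $h = h^{1/2}h^{1/2} = (S^\intercal N O^\intercal)(ONS) = S^\intercal N^2 S = S^\intercal D S$ with $D = N^2 = \bigoplus_j d_j I_2$. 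Since the transpose of a symplectic matrix is again symplectic, renaming $S^\intercal \mapsto S$ produces the claimed form $h = SDS^\intercal$, and the $d_j$ are by construction the symplectic eigenvalues (consistently, they are also the moduli of the eigenvalues $\pm i d_j$ of $\Omega h$, which is similar to $K$).

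I expect the only nontrivial input to be the skew-symmetric normal form, and the genuinely delicate point to be the bookkeeping that turns an \emph{orthogonal} conjugation into a \emph{symplectic} congruence. It is easy to write down a transformation that merely block-diagonalises $K$ (or $h$) without landing in $\mathrm{Sp}(2n)$; the decisive trick is precisely the factorisation $\Lambda = N\Omega N$, which absorbs the scalars $\sqrt{d_j}$ so that $S = N^{-1}O^\intercal h^{1/2}$ satisfies $S\Omega S^\intercal = \Omega$. A minor remaining point is that the pairing and block structure of the $d_j$ in $D$ follow automatically from the normal form, so no separate reordering argument is needed.
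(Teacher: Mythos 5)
Your proof is correct, but there is nothing in the paper to compare it against: the paper states Williamson's decomposition as a known preliminary and gives no proof at all (it is imported from the Gaussian quantum information literature). Your argument is the standard self-contained route, and every step checks out. The matrix $K \coloneqq h^{1/2}\Omega h^{1/2}$ is indeed real, skew-symmetric and invertible; the real normal form gives $K = O\Lambda O^\intercal$ with $\Lambda = \bigoplus_{j=1}^n d_j\left(\begin{smallmatrix}0&1\\-1&0\end{smallmatrix}\right)$ and $d_j>0$ forced by invertibility; the factorisation $\Lambda = N\Omega N$ with $N=\bigoplus_{j}\sqrt{d_j}\,\mathbb{1}_2$ is consistent with the paper's mode-wise ($xpxp$) ordering of $\Omega$; the matrix $S=N^{-1}O^\intercal h^{1/2}$ then satisfies $S\Omega S^\intercal=N^{-1}O^\intercal K O N^{-1}=N^{-1}\Lambda N^{-1}=\Omega$; and squaring $h^{1/2}=ONS$ yields $h=S^\intercal N^2 S = S^\intercal D S$, which becomes the claimed form after the legitimate renaming $S^\intercal\mapsto S$. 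Your closing identification of the $d_j$ as the symplectic eigenvalues is also sound, since $\Omega h = h^{-1/2}Kh^{1/2}$ is similar to $K$ and hence has eigenvalues $\pm i d_j$; the same conjugation argument shows this is consistent with $h = SDS^\intercal$.

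Two minor remarks. First, the closure of $\mathrm{Sp}(2n)$ under transposition, which you invoke without proof, follows in one line from the defining relation: $S\Omega S^\intercal=\Omega$ gives $S^\intercal=\Omega^{-1}S^{-1}\Omega$, hence $S^\intercal\Omega S=\Omega^{-1}S^{-1}\Omega^2 S=-\Omega^{-1}=\Omega$, using $\Omega^2=-\mathbb{1}$. Second, your construction proves the slightly stronger statement $d_j>0$ (the theorem only asserts $d_j\ge 0$), which is exactly what strict positivity of $h$ should force; this strengthening is in fact what the rest of the paper silently relies on when it writes $h=S^{-\intercal}D S^{-1}$ and takes $\coth(\beta^* d_j/2)$ in Theorem~\ref{m_24}.
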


The first moment $\mathbf{m}(\rho)$ and the covariance matrix $V(\rho)$ of a quantum state $\rho$ are defined as 
\bb \label{cm}
\mathbf{m}(\rho)&\coloneqq \Tr\big[\hat{\textbf{R}}\,\rho\big]\,,\\
V(\rho)&\coloneqq\Tr\big[\big\{(\hat{\textbf{R}}-m(\rho)\,\hat{\mathbb{1}}),(\hat{\textbf{R}}-m(\rho)\,\hat{\mathbb{1}})^{\intercal}\big\}\rho\big]\,,
\ee
where $(\cdot)^\intercal$ denotes the transpose operation, $\{A,B\}\coloneqq AB+BA$ represents the anti-commutator~\cite{BUCCO}. Remarkably, since any covariance matrix $V$ is positive and satisfies the so-called uncertainty relation $V+i\Omega \ge0$~\cite{BUCCO}, then $V$ can be written in Williamson's decomposition with all its symplectic eigenvalues being $\ge 1$. Moreover, a symplectic Gaussian unitary $U_S$ associated with a symplectic matrix $S$ acts on the first moment and covariance matrix as~\cite{BUCCO}
\bb\label{eq_tr_m}
    \textbf{m}(U_S\rho U_S^\dagger)&=S\textbf{m}(\rho)\,,\\
    V(U_S\rho U_S^\dagger)&=SV(\rho)S^\intercal \,.
\ee
while a displacement operator $\hat{D}_{\textbf{r}}$ associated with the amplitude $\textbf{r}$ acts as~\cite{BUCCO}
\bb\label{eq_tr_V}
    \textbf{m}(\hat{D}_{\textbf{r}}\rho \hat{D}_{\textbf{r}}^\dagger)&=\textbf{m}(\rho)+\textbf{r}\,,\\
    V(\hat{D}_{\textbf{r}}\rho \hat{D}_{\textbf{r}}^\dagger)&=V(\rho) \,.
\ee
Notably, the expectation value of of a quadratic Hamiltonian onto a state depends just on the first moment and on the covariance matrix of the state, as proved in the following lemma.
\begin{lemma}[(Expectation value of a quadratic Hamiltonian)]\label{lem_energy}
    Let $\hat{H}$ be a quadratic Hamiltonian of the form
    \bb\label{eqh}
        \hat{H}\coloneqq \frac12(\hat{\textbf{R}}-\textbf{r})^\intercal h (\hat{\textbf{R}}-\textbf{r})\,,
    \ee
    where $\textbf{r}\in\mathbb{R}^{2n}$ and $h\in\mathbb{R}^{2n\times 2n}$. Then, the expectation value of $\hat{H}$ onto a state $\rho$ is given by
    \bb
        \Tr[\hat{H}\rho]=\frac{1}{4}\Tr[hV(\rho)]+ \frac12\left( \textbf{m}(\rho)-\textbf{r}\right)^\intercal h \left( \textbf{m}(\rho)-\textbf{r}\right)\,,
    \ee
    where $\textbf{m}(\rho)$ and $V(\rho)$ are the first moment and covariance matrix of $\rho$, respectively.
\end{lemma}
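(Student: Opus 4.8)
The plan is to expand $\hat H$ into its scalar components and reduce the computation to the first and second moments of $\rho$, the only genuine subtlety being the non-commutativity of the quadratures. Writing $\hat H=\frac12\sum_{j,k}h_{jk}(\hat R_j-r_j)(\hat R_k-r_k)$, I would first argue that only the symmetric part of $h$ contributes. Decomposing each product $(\hat R_j-r_j)(\hat R_k-r_k)$ into its symmetric and antisymmetric parts, the antisymmetric part equals $\tfrac12[\hat R_j,\hat R_k]=\tfrac{i}{2}\Omega_{jk}\hat{\mathbb{1}}$, and since $\Omega$ is antisymmetric while $h$ may be taken symmetric (as required for $\hat H$ to be Hermitian), the contraction $\sum_{jk}h_{jk}\Omega_{jk}$ vanishes. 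This lets me rewrite $\hat H=\frac14\sum_{jk}h_{jk}\{\hat R_j-r_j,\hat R_k-r_k\}$ in manifestly symmetrised (anticommutator) form, which is exactly the form matching the definition of the covariance matrix in~\cref{cm}.

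Next I would shift by the mean. Setting $\delta\hat{\textbf{R}}\coloneqq\hat{\textbf{R}}-\textbf{m}(\rho)\hat{\mathbb{1}}$ and the c-number vector $\textbf{a}\coloneqq\textbf{m}(\rho)-\textbf{r}$, I decompose $\hat R_j-r_j=\delta\hat R_j+a_j\hat{\mathbb{1}}$ and expand the anticommutator:
\[
\{\hat R_j-r_j,\ \hat R_k-r_k\}=\{\delta\hat R_j,\delta\hat R_k\}+2a_k\,\delta\hat R_j+2a_j\,\delta\hat R_k+2a_ja_k\hat{\mathbb{1}}.
\]
Taking the expectation on $\rho$ and using $\Tr[\delta\hat R_j\,\rho]=m_j-m_j=0$, the two cross terms drop out, leaving $\Tr[\{\hat R_j-r_j,\hat R_k-r_k\}\rho]=V_{jk}(\rho)+2a_ja_k$, where the first term is precisely $V_{jk}(\rho)$ by definition.

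Substituting back and contracting with $h$ then gives
\[
\Tr[\hat H\rho]=\frac14\sum_{jk}h_{jk}V_{jk}(\rho)+\frac12\sum_{jk}h_{jk}a_ja_k=\frac14\Tr[hV(\rho)]+\frac12\,\textbf{a}^\intercal h\,\textbf{a},
\]
where in the first term I recognise $\sum_{jk}h_{jk}V_{jk}(\rho)=\Tr[hV(\rho)]$ (using the symmetry of $V$), and $\textbf{a}=\textbf{m}(\rho)-\textbf{r}$ reproduces the claimed quadratic form. I expect the only point requiring care to be the symmetrisation in the first step, namely making precise that the antisymmetric, commutator-induced piece contributes nothing because it is contracted with the symmetric $h$ against the antisymmetric symplectic form $\Omega$; everything else is routine bookkeeping of first and second moments.
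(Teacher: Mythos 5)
Your proof is correct and is essentially the paper's own argument written out in full: the paper disposes of this lemma with the one-line remark that it ``easily follows by applying the definition of first moment and covariance matrix'' together with the displacement/symplectic transformation rules, and your explicit expansion---symmetrising the product of quadratures, shifting by $\mathbf{m}(\rho)$, and contracting with $h$---is precisely the routine computation being invoked. The one place you are more careful than the paper is in justifying that the commutator piece vanishes because the symmetric $h$ is contracted against the antisymmetric $\Omega$; that is a worthwhile detail to make explicit, not a deviation in approach.
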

\begin{proof}
    It easily follows by applying the definition of first moment and covariance matrix together with~\cref{eq_tr_m} and~\cref{eq_tr_V}.
\end{proof}

The most important class of continuous-variable states is arguably the class of \emph{Gaussian states}, both from an experimental and theoretical point of view. By definition, a Gaussian state is a tensor product of Gibbs states of quadratic Hamiltonians in the quadrature operator vector~\cite{BUCCO}. A Gaussian state is uniquely identified by its covariance matrix and its first moment. In addition, the most important class of continuous-variable quantum channels is given by the class of \emph{Gaussian channels}~\cite{BUCCO}.
\begin{Def}[(Gaussian channel)]
    A Gaussian channel is a channel that maps the set of Gaussian states to itself.
\end{Def}
Since Gaussian states are uniquely identified by their first moments and convariance matrices, one can define a Gaussian channel by specifying how it acts on the first moments and covariance matrices. Following this idea, the following Lemma provides a characterization of Gaussian channels~\cite{BUCCO}.
\begin{lemma}[(Characterisation of Gaussian channels)]\label{char_gauss_ch}
    Let $\textbf{x}\in\mathbb{R}^{2n}$ and let $X,Y\in\mathbb{R}^{2n\times 2n}$ be such that the following operator inequality holds:
    \bb
        Y+i\Omega \ge i X\Omega X^\intercal\,.
    \ee
    Then, there exists a Gaussian channel $\Phi_{X,Y,\textbf{x}}$ that acts on any first moment \textbf{m} and any covariance matrix $V$ as
    \bb
        \textbf{m}&\longmapsto X\textbf{m}+\textbf{x}\,,\\
        V&\longmapsto XVX^\intercal+Y\,.
    \ee
    Moreover, any Gaussian channel is of the above form.
\end{lemma}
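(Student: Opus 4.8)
The plan is to pass to the \emph{characteristic-function} (Weyl-transform) picture, in which a state $\rho$ is represented by $\chi_\rho(\textbf{s})\coloneqq \Tr\big[\rho\, e^{i\textbf{s}^\intercal\hat{\textbf{R}}}\big]$, $\textbf{s}\in\mathbb{R}^{2n}$, and $\rho$ is Gaussian precisely when $\chi_\rho(\textbf{s})=\exp\!\big(-\tfrac14\textbf{s}^\intercal V(\rho)\,\textbf{s}+i\,\textbf{m}(\rho)^\intercal\textbf{s}\big)$ (one checks on the vacuum that the coefficient $-\tfrac14$ matches the factor-two convention for $V$ used here, for which $V=\id$). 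Since the unitaries $e^{i\textbf{s}^\intercal\hat{\textbf{R}}}$ are displacement operators (in the parametrisation $\hat{D}_{-\Omega\textbf{s}}$) and form a complete set, a channel $\Phi$ is fully determined by the family $\chi_{\Phi(\rho)}$. In this language the two claimed transformation rules are equivalent to the single identity
\bb\label{eq:charfn-rule}
 \chi_{\Phi_{X,Y,\textbf{x}}(\rho)}(\textbf{s})=\chi_\rho(X^\intercal\textbf{s})\,\exp\!\big(-\tfrac14\textbf{s}^\intercal Y\textbf{s}+i\,\textbf{x}^\intercal\textbf{s}\big)\,,
\ee
as one verifies by reading off the linear and quadratic terms of the exponent. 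The statement therefore reduces to deciding for which triples $(X,Y,\textbf{x})$ the prescription~\eqref{eq:charfn-rule} defines a genuine (completely positive, trace-preserving) map, and to showing that every Gaussian channel arises this way.

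For the existence (``if'') direction I would realise $\Phi_{X,Y,\textbf{x}}$ by a Gaussian Stinespring dilation: adjoin an environment of $n_E$ modes in a Gaussian state of covariance matrix $V_E$, apply a symplectic unitary $U_S$ on the $n+n_E$ modes, and trace out the environment. By the transformation rule~\eqref{eq_tr_m}, writing $S$ in system/environment blocks $\left(\begin{smallmatrix}A&B\\C&D\end{smallmatrix}\right)$, the reduced action on covariance matrices is $V\mapsto AVA^\intercal+BV_EB^\intercal$; choosing $A=X$ and arranging $BV_EB^\intercal=Y$ reproduces the covariance rule, while the first-moment rule and the displacement $\textbf{x}$ are supplied by post-composing with $\hat{D}_{\textbf{x}}$ via~\eqref{eq_tr_V}. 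The content of this step is the linear-algebra fact that the inequality $Y+i\Omega\ge iX\Omega X^\intercal$ is \emph{exactly} the condition under which such an $S$ and a physical $V_E$ (obeying $V_E+i\Omega_E\ge0$) exist for the given $X$; this is where Williamson's decomposition (\cref{v1_03}) enters, to bring the required $V_E$ into normal form and confirm its symplectic eigenvalues are $\ge1$.

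For the converse (``only if'') direction I would start from an arbitrary Gaussianity-preserving channel $\Phi$. Feeding in the coherent states — which all carry $V=\id$ and sweep out every first moment — shows that the output first moment is an affine function $\textbf{m}\mapsto X\textbf{m}+\textbf{x}$, fixing the candidate $X$ and $\textbf{x}$, and the image of the vacuum $\rho_0$ fixes $Y\coloneqq V(\Phi(\rho_0))-XX^\intercal$. The technical heart of this direction is then to establish that $\Phi$ genuinely acts by~\eqref{eq:charfn-rule} on \emph{all} inputs; this is most cleanly phrased as the assertion that the Heisenberg dual sends each Weyl operator to a scalar multiple of a Weyl operator, $\Phi^\dagger\!\big(e^{i\textbf{s}^\intercal\hat{\textbf{R}}}\big)\propto e^{i\textbf{s}^\intercal X\hat{\textbf{R}}}$, which encodes both transformation rules at once. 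Finally, complete positivity forces $Y+i\Omega\ge iX\Omega X^\intercal$: applying $\Phi\otimes\Id$ to a two-mode squeezed vacuum and demanding that the output covariance matrix satisfy the uncertainty relation $V+i\Omega\ge0$, then letting the squeezing tend to infinity, saturates exactly this bound.

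The main obstacle is precisely the equivalence between complete positivity and the operator inequality $Y+i\Omega\ge iX\Omega X^\intercal$, which must be secured in both directions and is the only genuinely nontrivial point, the affine bookkeeping of moments being a routine consequence of~\eqref{eq_tr_m}--\eqref{eq_tr_V}. A reassuring sanity check that the bound is the right one is the transposition map, $X=\diag(1,-1)$ and $Y=0$ on one mode, for which $X\Omega X^\intercal=-\Omega$, so the inequality reads $2i\Omega\ge0$ and fails — correctly flagging the partial transpose as positive but not completely positive. I would therefore invest most of the effort in the characteristic-function form of the complete-positivity criterion, from which both the necessity used in the converse and the existence of a physical dilating $V_E$ in the forward direction follow.
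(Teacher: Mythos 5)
The paper never proves this lemma: it is imported as a known preliminary from the continuous-variable literature \cite{BUCCO}, so there is no in-paper argument to compare against, and your proposal has to stand on its own. Its skeleton is indeed the standard route (characteristic functions, a Gaussian Stinespring dilation for existence, a complete-positivity criterion for necessity), and several details check out: the $-\tfrac14$ exponent is the right match to the paper's anticommutator normalisation of $V$ (vacuum covariance $\id$); the rule $\chi_{\Phi(\rho)}(\textbf{s})=\chi_\rho(X^\intercal\textbf{s})\,e^{-\frac14\textbf{s}^\intercal Y\textbf{s}+i\,\textbf{x}^\intercal\textbf{s}}$ is equivalent to the two stated moment rules; and the transposition sanity check ($X=\diag(1,-1)$, $Y=0$, giving $2i\Omega\ge 0$, which fails) is correct.

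As a proof, however, the proposal has two genuine gaps, precisely at the points you yourself label ``the content of this step'' and ``the technical heart''. In the forward direction, everything rests on the claim that $Y+i\Omega\ge iX\Omega X^\intercal$ is exactly the condition under which a symplectic $S=\left(\begin{smallmatrix}A&B\\C&D\end{smallmatrix}\right)$ and a physical environment covariance matrix $V_E$ (with $V_E+i\Omega_E\ge0$) exist with $A=X$ and $BV_EB^\intercal=Y$; no argument is given, and Williamson's decomposition (\cref{v1_03}) alone does not produce one --- the actual work is to complete the block row $(X\;\;B)$ to a symplectic matrix, which forces $B\Omega_E B^\intercal=\Omega-X\Omega X^\intercal$, and to solve this simultaneously with $BV_EB^\intercal=Y$; showing the operator inequality makes this system solvable is the dilation theorem itself. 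In the converse direction the situation is worse, because under the paper's definition (a Gaussian channel is any CPTP map sending Gaussian states to Gaussian states) this is a genuinely hard rigidity theorem. Your step ``feeding in coherent states shows that the output first moment is an affine function'' does not follow from anything written: $\Phi$ is linear on density operators, but the parametrisation of Gaussian states by $(\textbf{m},V)$ is nonlinear, and convex mixtures of coherent states are not Gaussian, so linearity of $\Phi$ places no direct constraint on the moment map. Likewise, the assertion that the Heisenberg dual sends each Weyl operator to a scalar multiple of a Weyl operator, $\Phi^\dagger\big(e^{i\textbf{s}^\intercal\hat{\textbf{R}}}\big)\propto e^{i(X^\intercal\textbf{s})^\intercal\hat{\textbf{R}}}$, is not a clean rephrasing of what must be shown --- it \emph{is} what must be shown (essentially the Gaussian-to-Gaussian normal-form theorem of De Palma, Mari, Giovannetti and Holevo), and assuming it assumes the conclusion. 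Your TMSV limiting argument for the necessity of the inequality is fine, but only once that covariance structure of $\Phi\otimes\Id$ has been established. To close the argument you would either need to prove these two load-bearing statements or do what the paper does and cite them.
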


\section{Gaussian ergotropy}
\label{sec:bigproof}

In this section we prove the first main result of the manuscript, which provides an explicit formula for the Gaussian ergotropy in terms of the covariance matrix of a given quantum state $\rho$ assuming a quadratic Hamiltonian.

\begin{thm}[(Compact formula for the Gaussian ergotropy)] \label{v1_010}
    Let $\hat{H}$ be a quadratic Hamiltonian of the form
    \bb\label{eq:ham_quad}
        \hat{H}\coloneqq \frac12(\hat{\textbf{R}}-\textbf{r})^\intercal h (\hat{\textbf{R}}-\textbf{r})\,,
    \ee
    where $\textbf{r}\in\mathbb{R}^{2n}$, and $h\in\mathbb{R}^{2n\times 2n}$ is a strictly positive matrix. Then, the Gaussian ergotropy of a quantum state $\rho$ can be written as
    \bb
        \mathcal{E}^{(\hat{H})}_G(\rho) = \frac{1}{4}\Tr[hV(\rho)]+ \frac12\left( \textbf{m}(\rho)-\textbf{r}\right)^\intercal h \left( \textbf{m}(\rho)-\textbf{r}\right)-\frac12\sum_{j=1}^nd_j^\uparrow(h)\,d_j^{\downarrow}\!\left(  V(\rho)  \right)\,,
    \ee
    where:
    \begin{itemize}
        \item $\textbf{m}(\rho)$ is the first moment of $\rho$;
        \item $V(\rho)$ is the covariance matrix of $\rho$;
        \item $d_1^{\uparrow}(h)\le d_2^{\uparrow}(h)\le\ldots \le d_n^{\uparrow}(h) $ are the symplectic eigenvalues of $h$ ordered in increasing order;
        \item $d_1^{\downarrow}(V(\rho))\ge d_2^{\downarrow}(V(\rho))\ge\ldots \ge d_n^{\downarrow}(V(\rho)) $ are the symplectic eigenvalues of $V(\rho)$ ordered in decreasing order.
    \end{itemize}
Moreover, the Gaussian unitary $G$ that achieves the supremum in the definition of Gaussian ergotropy in~\cref{m_29} is given by the composition of the following Gaussian unitaries:
\bb\label{eq_g_opt}
    \bar{G}=D_{\textbf{r}} U_{S_2}  U_{S_1} D^\dagger_{\textbf{m}(\rho)}\,,
\ee
where 
\begin{enumerate}
    \item $D_{\textbf{m}(\rho)}$ is the displacement unitary with amplitude equal to the first moment of the state $\rho$;
    \item $U_{S_1}$ is the symplectic gaussian unitary associated with the symplectic matrix $S_1$ that puts the covariance matrix $V(\rho)$ in Williamson's decomposition with the symplectic eigenvalues ordered in decreasing order as
    \bb
        S_1V(\rho)S_1^{\intercal}= \diag\left( d_1^{\downarrow}(V(\rho)), d_1^{\downarrow}(V(\rho)), \ldots, d_n^{\downarrow}(V(\rho)),d_n^{\downarrow}(V(\rho))\right)\,,
    \ee
    \item  $U_{S_2}$ is the symplectic gaussian unitary associated with the symplectic matrix $S_2$ that puts $h$ in Williamson's decomposition with the symplectic eigenvalues ordered in increasing order as 
    \bb
        S_2^\intercal h S_2 = \diag\left( d_1^{\uparrow}(h), d_1^{\uparrow}(h), \ldots,  d_n^{\uparrow}(h), d_n^{\uparrow}(h)\right)\,,
    \ee
    \item  $D_{\textbf{r}}$ is the displacement unitary with amplitude equal to $\textbf{r}$.
\end{enumerate}
Equivalently, the Gaussian-passive state is given by $\rho_G^{\downarrow}=\bar{G}\rho\bar{G}^\dagger$.
\end{thm}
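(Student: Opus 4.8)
The plan is to translate the minimisation over Gaussian unitaries into a finite-dimensional minimisation over the symplectic group together with a displacement vector, and then to isolate a single matrix trace inequality as the analytic core. First I would parametrise a generic Gaussian unitary as $G=\hat D_{\mathbf s}U_S$ and use \cref{lem_energy} together with the transformation rules \cref{eq_tr_m,eq_tr_V} to write
\[
\Tr\!\big[\hat H\,G\rho G^\dagger\big]=\tfrac14\Tr\!\big[h\,S V(\rho)S^\intercal\big]+\tfrac12\big(S\mathbf m(\rho)+\mathbf s-\mathbf r\big)^\intercal h\big(S\mathbf m(\rho)+\mathbf s-\mathbf r\big),
\]
so that the objective depends on $G$ only through the pair $(S,\mathbf s)\in\mathrm{Sp}(2n)\times\mathbb R^{2n}$.

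The displacement decouples and is easy: for every fixed $S$ the second term is a nonnegative quadratic form in $\mathbf s$ (here I use $h>0$), minimised to $0$ by the choice $\mathbf s=\mathbf r-S\mathbf m(\rho)$. This already fixes the displacement part of the optimal $\bar G$ (the outer $D_{\mathbf r}$ and the inner $D_{\mathbf m(\rho)}^\dagger$) and reduces the problem to $\min_{S\in\mathrm{Sp}(2n)}\tfrac14\Tr[h\,S V(\rho)S^\intercal]$. Next I would use Williamson's decomposition (\cref{v1_03}) to write $h=A D_h A^\intercal$ and $V(\rho)=B D_V B^\intercal$ with $A,B$ symplectic and $D_h,D_V$ the doubled diagonal matrices of symplectic eigenvalues; substituting and setting $T=A^\intercal S B$ (again symplectic, ranging over all of $\mathrm{Sp}(2n)$ as $S$ does) collapses everything to the canonical problem $\min_{T\in\mathrm{Sp}(2n)}\Tr[D_h\,T D_V T^\intercal]$.

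The heart of the proof — and the main obstacle — is the trace inequality
\[
\min_{T\in\mathrm{Sp}(2n)}\Tr\!\big[D_h\,T D_V T^\intercal\big]=2\sum_{j=1}^n d_j^\uparrow(h)\,d_j^\downarrow\!\big(V(\rho)\big),
\]
which is the symplectic analogue of the von Neumann/rearrangement trace inequality $\min_O\Tr[AOBO^\intercal]=\sum_i\lambda_i^\uparrow\lambda_i^\downarrow$ for orthogonal $O$. The upper bound (achievability) is the easy direction: a permutation of the modes is symplectic and keeps $D_h,D_V$ diagonal, so choosing the permutation that pairs the smallest symplectic eigenvalues of $h$ with the largest of $V(\rho)$ attains the right-hand side, and the rearrangement inequality shows no other pairing does better. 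The lower bound is what requires work, because the symplectic group is noncompact and much larger than the orthogonal group. I would first note that $D_h,D_V\succ0$ makes $T\mapsto\Tr[D_h T D_V T^\intercal]$ coercive, so a minimiser exists and is a genuine critical point; differentiating along the symplectic Lie algebra $X=\Omega Y$ (with $Y=Y^\intercal$) yields the stationarity condition $M D_h\Omega=\Omega D_h M$, where $M=T D_V T^\intercal$, which I would analyse to show that at a minimiser $M$ and $D_h$ are simultaneously brought to doubled-diagonal form, reducing the value to a pairing governed by the rearrangement inequality. A cleaner alternative, if one is willing to invoke it, is a Ky--Fan-type variational principle for partial sums of symplectic eigenvalues (of Bhatia--Jain type), from which the weighted inequality follows by Abel summation using the monotonic ordering of $d_j^\uparrow(h)$.

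Finally I would reassemble the chain of reductions: the optimal $T$ corresponds to $S=A^{-\intercal}\Pi B^{-1}$ for the optimal pairing permutation $\Pi$, which unwinds precisely into $\bar G=D_{\mathbf r}U_{S_2}U_{S_1}D_{\mathbf m(\rho)}^\dagger$ — with $D_{\mathbf m(\rho)}^\dagger$ centring the state, $U_{S_1}$ Williamson-diagonalising $V(\rho)$ in decreasing order, $U_{S_2}$ diagonalising $h$ in increasing order so as to realise the optimal pairing, and $D_{\mathbf r}$ restoring the offset that annihilates the first-moment term. Substituting back gives $\min_{U\in\mathcal G}\Tr[\hat H U\rho U^\dagger]=\tfrac12\sum_j d_j^\uparrow(h)d_j^\downarrow(V(\rho))$ and hence the claimed formula for $\mathcal E_G^{(\hat H)}(\rho)$.
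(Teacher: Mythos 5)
Your reduction to a finite-dimensional problem is exactly the paper's: use \cref{lem_energy} and the transformation rules \cref{eq_tr_m,eq_tr_V} to kill the first-moment term with an optimal displacement, then Williamson-decompose $h$ and $V(\rho)$ and use the group structure of $\mathrm{Sp}(2n)$ to collapse everything to $\min_{T\in\mathrm{Sp}(2n)}\Tr[D_h T D_V T^\intercal]$, with achievability at the identity pairing; your unwinding of the optimiser into $\bar G=D_{\mathbf r}U_{S_2}U_{S_1}D^\dagger_{\mathbf m(\rho)}$ also matches. The divergence is entirely in the lower bound for the trace minimisation, i.e.\ the paper's \cref{lem_min_s}. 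The paper proves it by invoking the Bhatia--Jain result quoted as \cref{lemma_dio_bathia} (the mode-averaged squared blocks of a symplectic matrix dominate a doubly stochastic matrix entrywise), then Birkhoff's theorem and the rearrangement inequality finish in a few lines. Your ``cleaner alternative'' --- a Ky--Fan-type principle for partial sums of symplectic eigenvalues combined with Abel summation against the monotone weights $d_j^\uparrow(h)$ --- is a legitimate different route: the needed ingredient (the half-trace of any principal mode-subset of $TD_VT^\intercal$ dominates the sum of its smallest symplectic eigenvalues) is a known theorem from the same Bhatia--Jain paper, and what this buys is conceptual transparency, the weighted inequality being exhibited as a positive combination of unweighted partial-sum inequalities.

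Your primary route, the critical-point analysis, is the one place where the proposal falls short of a proof. The setup is fine: coercivity holds because $D_h,D_V\succ0$ (the latter since symplectic eigenvalues of covariance matrices are $\ge 1$), so a minimiser exists on the closed set $\mathrm{Sp}(2n)$, and your stationarity condition is correct --- varying $T\mapsto e^{tX}T$ with $X=\Omega Y$, $Y$ symmetric, gives $MD_h\Omega=\Omega D_h M$ with $M=TD_VT^\intercal$. But the decisive step, ``analyse to show that at a minimiser $M$ and $D_h$ are simultaneously brought to doubled-diagonal form,'' is asserted rather than carried out, and that is where all the content of the lemma lives. To close it one should note that $D_h$, being doubled-diagonal, commutes with $\Omega$, so stationarity reads $[M,\,D_h\Omega]=0$; when the $d_j^\uparrow(h)$ are pairwise distinct, $D_h\Omega$ has $2n$ distinct eigenvalues $\pm i d_j^\uparrow(h)$, so its commutant meets the symmetric matrices only in $M=\bigoplus_j a_j\mathbb{1}_2$; since symplectic congruence preserves symplectic spectra, the $a_j$ are a permutation of the $d_j(V(\rho))$, and the rearrangement inequality identifies the minimal critical value as $2\sum_j d_j^\uparrow(h)\,d_j^\downarrow(V(\rho))$. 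The degenerate case of repeated $d_j^\uparrow(h)$ is not covered by this commutant argument and needs a separate (easy) limiting step, e.g.\ perturbing $h$ to simple symplectic spectrum and passing to the limit in the inequality at fixed $T$. None of this is unfixable, but as written the heart of the theorem rests on a sketch, whereas the paper's doubly-stochastic argument is complete as stated.
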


\begin{proof}
Thanks to~\cref{lem_energy}, it suffices to show that 
\bb  \label{eq:Gpass1}
    \inf_{G \in \mathcal G} \Tr[\hat{H} G \rho G^\dagger]=\frac12\sum_{j=1}^nd_j^\uparrow(h)\,d_j^{\downarrow}\!\left(  V(\rho)  \right)\,.
\ee
To this end, note that 
\bb \label{eq:Gpass2}
    \inf_{G \in \mathcal G} \Tr[\hat{H} G \rho G^\dagger]&\eqt{(i)} \inf_{G \in \mathcal G} \left[\frac{1}{4}\Tr[hV(G\rho G^\dagger)]+ \frac12\left( \textbf{m}(G\rho G^\dagger)-\textbf{r}\right)^\intercal h \left( \textbf{m}(G\rho G^\dagger)-\textbf{r}\right)\right]\\
    &\eqt{(ii)} \inf_{G \in \mathcal G} \frac{1}{4}\Tr[hV(G\rho G^\dagger)]\,\\
    &\eqt{(iii)} \inf_{S \in \mathrm{Sp}(2n)} \frac{1}{4}\Tr[hSV(\rho) S^\intercal]\,\\
    &\eqt{(iv)} \inf_{S \in \mathrm{Sp}(2n)} \frac{1}{4}\Tr[D^{\uparrow}(h)S D^{\downarrow}\!\left(V(\rho)\right) S^\intercal]\,\\
    &\eqt{(v)}  \frac{1}{4}\Tr[D^{\uparrow}(h)D^{\downarrow}\!\left(V(\rho)\right) ]\,\\
    &=  \frac{1}{2}\sum_{j=1}^nd_j^\uparrow(h)\,d_j^{\downarrow}\!\left(  V(\rho)  \right)\,.
\ee
Here, in (i), we employed~\cref{lem_energy}. In (ii), we used that: (a) $h$ is positive; (b) the set of Gaussian unitaries forms a group, so that we can minimise over $D_{\textbf{x}}G$, where $\textbf{x}$ is chosen such that $\textbf{m}(D_{\textbf{x}}G\rho G^\dagger D_{\textbf{x}}^\dagger)=\textbf{r}$; (c) the transformation of first moments and covariance matrices under displacement operators in~\cref{eq_tr_V}. In (iii), we exploited~\cref{eq_tr_m} and introduced the symplectic group $\mathrm{Sp}(2n)$. In (iv), we did the following: (a) we performed the Williamson's decompositions of $h$ and $V(\rho)$; (b) we used that the symplectic group is actually a group; (c) we introduced the diagonal matrices of symplectic eigenvalues of $h$ and $V(\rho)$:
\bb
D^\uparrow(h)&\coloneqq\bigoplus_{j=1}^n  \left(\begin{matrix}d_j^\uparrow(h)&0\\0&d_j^\uparrow(h)\end{matrix}\right)\,,\\
D^{\downarrow}\!\left(V(\rho)\right)&\coloneqq \bigoplus_{j=1}^n  \left(\begin{matrix}d_j^{\downarrow}(V(\rho))&0\\0&d_j^{\downarrow}(V(\rho))\end{matrix}\right)\,. 
\ee
Finally, in (v), we employed~\cref{lem_min_s} of~\cref{subsecf}, which establishes that the infimum is achieved by taking $S=\mathbb{1}$.

Let us now prove that the optimiser of the Gaussian ergotropy is given by the Gaussian unitary $\bar{G}$ in~\cref{eq_g_opt}. To this end, let us observe that
\bb
    \Tr[\hat{H} \bar{G} \rho \bar{G}^\dagger]&\eqt{(vi)}\frac{1}{4}\Tr[hV(\bar{G}\rho \bar{G}^\dagger)]+ \frac12\left( \textbf{m}(\bar{G}\rho \bar{G}^\dagger)-\textbf{r}\right)^\intercal h \left( \textbf{m}(\bar{G}\rho \bar{G}^\dagger)-\textbf{r}\right)\\
    &\eqt{(vii)}  \frac{1}{4}\Tr[h S_2D^{\downarrow}\!\left(V(\rho)\right)S_2^{\intercal} ]\\
    &=\frac{1}{4}\Tr[D^{\uparrow}(h)D^{\downarrow}\!\left(V(\rho)\right) ]\\
    &=\frac{1}{2}\sum_{j=1}^nd_j^\uparrow(h)\,d_j^{\downarrow}\!\left(  V(\rho)  \right)\,,
\ee
where in (vi) we employed~\cref{lem_energy} and 
in (vii) we used that
\bb 
    \textbf{m}(\bar{G}\rho \bar{G}^\dagger)&=\textbf{r}\,,\\
    V(\bar{G}\rho \bar{G}^\dagger)&= S_2D^{\downarrow}\!\left(V(\rho)\right)S_2^{\intercal}  \,.
\ee
\end{proof}

Let us apply the above result to the standard Hamiltonian $\hat{H}_0=\frac12\hat{\textbf{R}}^\intercal\hat{\textbf{R}} $, which corresponds to the choices $h=\mathbb{1}$ and $\textbf{r}=0$. By employing the above theorem, we find that the Gaussian ergotropy of a quantum state $\rho$ associated with the standard Hamiltonian $\hat{H}_0$ is given by
\bb
    \mathcal{E}^{(\hat{H}_0)}_G(\rho) = \frac{1}{4}\Tr V(\rho)+ \frac12 \|\textbf{m}(\rho)\|^2-\frac12\sum_{j=1}^n\,d_j\!\left(  V(\rho)  \right)\,,
\ee
where $d_1\!\left(  V(\rho)  \right),\ldots, d_n\!\left(  V(\rho)  \right)$ denote the symplectic eigenvalues of $V(\rho)$.

\subsection{Minimisation over symplectic matrices}\label{subsecf}
This subsection is devoted to the proof of the following lemma, which is one of the key steps of the proof of the closed formula for the Gaussian ergotropy.
\begin{lemma}\label{lem_min_s}
    Let $n\in\mathbb{N}$ and let
    \bb \label{v1_04}
D_1^\uparrow &\coloneqq\bigoplus_{j=1}^n  \left(\begin{matrix}\alpha_j^\uparrow&0\\0&\alpha_j^\uparrow\end{matrix}\right)\,,\\
D_2^{\downarrow}\!&\coloneqq \bigoplus_{j=1}^n  \left(\begin{matrix}\beta_j^{\downarrow}&0\\0&\beta_j^{\downarrow}\end{matrix}\right)\,,
\ee
where $0\le \alpha_1^\uparrow\le \ldots \le \alpha_n^\uparrow$ and $\beta_1^\downarrow\ge \ldots \ge \beta_n^\downarrow\ge0$. Then, it holds that
\bb \label{v1_07}
    \min_{S\in\mathrm{Sp}(2n)}\Tr[D_1^\uparrow S D_2^\downarrow S^\intercal]=\Tr[D_1^\uparrow  D_2^\downarrow ]\,,
\ee
where the infimum is performed over the set $\mathrm{Sp}(2n)$ of symplectic matrices.
\end{lemma}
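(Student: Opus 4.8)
The plan is to exploit the special block structure of $D_1^\uparrow$ and $D_2^\downarrow$ to convert the symplectic optimisation into a purely combinatorial statement about submatrices of $S$. Writing $S$ in $n\times n$ block form with $2\times2$ blocks $S_{kl}$, and using that $D_1^\uparrow=\bigoplus_k \alpha_k^\uparrow\mathbb{1}_2$ and $D_2^\downarrow=\bigoplus_l \beta_l^\downarrow\mathbb{1}_2$ are block-scalar, a direct computation of the $(k,k)$-blocks of $SD_2^\downarrow S^\intercal$ gives
\bb
\Tr[D_1^\uparrow S D_2^\downarrow S^\intercal]=\sum_{k,l=1}^n \alpha_k^\uparrow\beta_l^\downarrow\,\|S_{kl}\|_F^2\,,
\ee
where $\|\cdot\|_F$ denotes the Frobenius norm. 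Thus the problem reduces to minimising a nonnegatively-weighted sum of the squared block norms $\|S_{kl}\|_F^2$ over $S\in\mathrm{Sp}(2n)$.

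Next I would linearise the ordered weights by a layer-cake (Abel) decomposition. Since $0\le\alpha_1^\uparrow\le\cdots\le\alpha_n^\uparrow$ and $\beta_1^\downarrow\ge\cdots\ge\beta_n^\downarrow\ge0$, I write $\alpha_k^\uparrow=\sum_{i\le k}\gamma_i$ and $\beta_l^\downarrow=\sum_{j\ge l}\delta_j$ with $\gamma_i=\alpha_i^\uparrow-\alpha_{i-1}^\uparrow\ge0$ and $\delta_j=\beta_j^\downarrow-\beta_{j+1}^\downarrow\ge0$. Substituting and exchanging summation order recasts the objective as $\sum_{i,j}\gamma_i\delta_j\,\Phi(i,j)$, where $\Phi(i,j):=\sum_{k\ge i}\sum_{l\le j}\|S_{kl}\|_F^2=\|B_{ij}\|_F^2$ is the squared Frobenius norm of the submatrix $B_{ij}$ of $S$ obtained by keeping block-rows $i,\dots,n$ and block-columns $1,\dots,j$. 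Applying the same decomposition to the target value yields $\Tr[D_1^\uparrow D_2^\downarrow]=\sum_{i,j}\gamma_i\delta_j\,2\max(0,j-i+1)$. Because every coefficient $\gamma_i\delta_j$ is nonnegative, it then suffices to prove the single submatrix estimate $\Phi(i,j)\ge 2(j-i+1)$ for $i\le j$ (the case $i>j$ being trivial since $\Phi\ge0$).

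The core of the argument, and the step I expect to be the main obstacle, is precisely this submatrix bound. Setting $p:=i-1$, I would split the rows of $S$ into the top block-rows $T=\{1,\dots,p\}$ and bottom block-rows $I=\{p+1,\dots,n\}$, and the columns into the left block-columns $J=\{1,\dots,j\}$ and the rest. Reading off the $(J,J)$-block of the symplectic identity $S^\intercal\Omega S=\Omega$ gives $B_{ij}^\intercal\,\Omega_I\,B_{ij}=\Omega_J-E^\intercal\Omega_T E$, where $E=S[T,J]$ and $\Omega_T,\Omega_I,\Omega_J$ are the symplectic forms on the relevant coordinate subspaces. Since $E$ has only $2p$ rows, $\rk(E^\intercal\Omega_T E)\le 2p$, so Weyl's singular-value inequality applied to $\Omega_J=(B_{ij}^\intercal\Omega_I B_{ij})+(E^\intercal\Omega_T E)$, together with the fact that $\Omega_J$ is orthogonal (all singular values equal to $1$), shows that $M:=B_{ij}^\intercal\Omega_I B_{ij}$ has at least $2j-2p=2(j-i+1)$ singular values $\ge1$, hence trace norm $\|M\|_1\ge 2(j-i+1)$. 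Finally, the Schatten–Hölder inequality $\|XY\|_1\le\|X\|_F\|Y\|_F$ with $X=B_{ij}^\intercal$ and $Y=\Omega_I B_{ij}$, using orthogonal invariance of $\|\cdot\|_F$ under $\Omega_I$, gives $\|M\|_1\le\|B_{ij}\|_F^2=\Phi(i,j)$. Chaining these bounds delivers $\Phi(i,j)\ge 2(j-i+1)$.

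To conclude, I would check that $S=\mathbb{1}$ saturates every estimate: since $(\mathbb{1})_{kl}=\delta_{kl}\mathbb{1}_2$ one has $\Phi(i,j)=2(j-i+1)$ for $i\le j$, so the lower bound $\Tr[D_1^\uparrow S D_2^\downarrow S^\intercal]\ge\Tr[D_1^\uparrow D_2^\downarrow]$ is attained at the identity. This simultaneously shows the infimum is a genuine minimum and equals $\Tr[D_1^\uparrow D_2^\downarrow]=2\sum_j\alpha_j^\uparrow\beta_j^\downarrow$, as claimed. The only delicate points are the ordering bookkeeping in the Weyl step and the combination of the rank count with the Schatten–Hölder bound; the reduction in the first two paragraphs and the equality case are routine.
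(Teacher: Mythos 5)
Your proof is correct, but it takes a genuinely different route from the paper's. Both arguments begin with the same reduction, $\Tr[D_1^\uparrow S D_2^\downarrow S^\intercal]=\sum_{k,l}\alpha_k^\uparrow\beta_l^\downarrow\|S_{kl}\|_F^2$, but diverge immediately afterwards. The paper (working in the reordered convention of \cref{lem_min_soff}) invokes an external result, \cref{lemma_dio_bathia} (Theorem~6 of Bhatia--Jain), stating that the matrix with entries $\tfrac12\|S_{kl}\|_F^2$ dominates entrywise some doubly stochastic matrix; it then applies Birkhoff's theorem to write that matrix as a convex combination of permutations and concludes with the rearrangement inequality for oppositely ordered sequences. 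You instead linearise the ordered weights by Abel summation, reducing everything to the corner-sum estimate $\Phi(i,j)\ge 2(j-i+1)$ for the lower-left block submatrices of $S$, which you prove from scratch: the $(J,J)$ block of $S^\intercal\Omega S=\Omega$, a rank count on $E^\intercal\Omega_T E$, Weyl's singular-value inequality, and the Schatten--H\"older bound $\|XY\|_1\le\|X\|_F\|Y\|_F$. Note that your corner bound is precisely the consequence of entrywise doubly stochastic domination that is actually needed (for a doubly stochastic $Q$, any corner submatrix on rows $i,\dots,n$ and columns $1,\dots,j$ has total mass at least $j-i+1$), so your argument proves directly, and only, what the cited theorem is used for. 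What each approach buys: the paper's route is shorter but rests on a nontrivial citation and a detour through a second ordering convention; yours is fully self-contained, stays in the original block ordering of \cref{lem_min_s}, and isolates a clean structural fact about symplectic matrices (their lower-left corners cannot be small in Frobenius norm), at the price of heavier matrix-analysis machinery --- Weyl plus H\"older in place of Birkhoff plus rearrangement. Both proofs correctly exhibit $S=\mathbb{1}$ as the minimiser, so the infimum is attained.
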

The above Lemma is equivalent to the following one, where we switch the ordering convention in the definition of symplectic matrices.
\begin{lemma}\label{lem_min_soff} 
    Let $n\in\mathbb{N}$ and let
    \bb \label{v1_04}
D_1^\uparrow &\coloneqq\diag(  \alpha_1^\uparrow,\ldots \alpha_n^\uparrow,  \alpha_1^\uparrow,\ldots \alpha_n^\uparrow )\\
D_2^{\downarrow}\!&\coloneqq \diag( \beta_1^\downarrow, \ldots , \beta_n^\downarrow,\beta_1^\downarrow, \ldots , \beta_n^\downarrow)\,,
\ee
where $0\le \alpha_1^\uparrow\le \ldots \le \alpha_n^\uparrow$ and $\beta_1^\downarrow\ge \ldots \ge \beta_n^\downarrow\ge0$. Then, it holds that
\bb \label{v1_07}
    \min_{S\in\mathrm{Sp}'(2n)}\Tr[D_1^\uparrow S D_2^\downarrow S^\intercal]=\Tr[D_1^\uparrow  D_2^\downarrow ]\,,
\ee
where the infimum is performed over the set $\mathrm{Sp}'(2n)$ of symplectic matrices "with respect to the xp-ordering". That is, $\mathrm{Sp}'(2n)$ is the set of all $2n\times 2n$ matrices such that $SJS^\intercal=J$, where
    \bb 
        &J= \begin{bmatrix} 0  & \mathbb{1} \\ -\mathbb{1} & 0 \end{bmatrix}\,.
    \ee
\end{lemma}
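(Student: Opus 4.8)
The plan is to prove the off-diagonal version (\cref{lem_min_soff}), since the two lemmas are explicitly stated to be equivalent up to the reordering of coordinates that swaps the canonical form $\Omega$ for $J$. The key observation is that the quantity to be minimised, $\Tr[D_1^\uparrow S D_2^\downarrow S^\intercal]$, is a weighted sum $\sum_{i,j} (D_1^\uparrow)_{ii} (D_2^\downarrow)_{jj} (S_{ij})^2$ of the squared entries of $S$, so the problem reduces to understanding which matrices $M$ with $M_{ij} = (S_{ij})^2$ can arise from a symplectic $S$, and then minimising a linear functional of $M$ over that set. I would first rewrite the objective as $\Tr[D_1^\uparrow M D_2^\downarrow]$ where $M$ is the entrywise square of $S$ (a doubly substochastic-type object), then exploit the symplectic constraint $SJS^\intercal = J$ to extract the relevant inequalities on the row/column sums of $M$.

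The heart of the argument is to show that the entrywise-squared matrix $M$ of a symplectic matrix $S \in \mathrm{Sp}'(2n)$ satisfies a majorisation-type constraint that forces the minimum to be attained at $S = \id$. Writing $S$ in $2\times 2$ block form compatible with the $xp$-splitting, $S = \left(\begin{smallmatrix} A & B \\ C & E \end{smallmatrix}\right)$, the condition $SJS^\intercal = J$ gives $AE^\intercal - BC^\intercal = \id$ and the symmetry relations $AB^\intercal = BA^\intercal$, $CE^\intercal = EC^\intercal$. Because $D_1^\uparrow$ and $D_2^\downarrow$ each repeat their spectrum across the two $xp$-blocks, the objective only sees the combined block $P \coloneqq A \circ A + C \circ C$ (Hadamard squares) in one direction and $A\circ A + B \circ B$ in the other; the symplectic relation $AE^\intercal - BC^\intercal = \id$ is precisely what constrains these Hadamard-squared blocks to have row and column sums bounded below by $1$. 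Once I establish that the relevant nonnegative matrix built from $S$ has all row sums and column sums at least $1$, I can invoke the rearrangement/majorisation inequality: minimising $\sum_i \alpha_i^\uparrow \beta_{\sigma(i)}^\downarrow$-type sums against an oppositely-ordered pair of vectors is optimised by pairing the smallest $\alpha$ with the largest $\beta$, which is exactly the $S = \id$ configuration.

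Concretely, I would proceed as follows. First, reduce to $\mathrm{Sp}'(2n)$ and expand the trace into $\sum_{i,j} \alpha_i \beta_j (S^2)_{ij}$ with the index convention that both $D$'s are block-repeated. Second, extract from $SJS^\intercal = J$ (equivalently $S^\intercal J S = J$, since $\mathrm{Sp}'(2n)$ is a group) the scalar consequences: by reading off the diagonal of the defining relation and using Cauchy--Schwarz, show that for each canonical mode the total "weight" of $S^2$ distributed across conjugate quadratures is at least $1$. Third, feed these lower bounds on the appropriate partial sums of the Hadamard-squared entries into a standard extremal inequality for the pairing of an increasing sequence $(\alpha_j^\uparrow)$ with a decreasing sequence $(\beta_j^\downarrow)$; this yields $\Tr[D_1^\uparrow S D_2^\downarrow S^\intercal] \ge \Tr[D_1^\uparrow D_2^\downarrow]$. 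Finally, check that $S = \id$ saturates every inequality used, establishing that the bound is achieved and the minimum equals $\Tr[D_1^\uparrow D_2^\downarrow]$.

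\textbf{The main obstacle} I anticipate is the second step: converting the matrix identity $SJS^\intercal = J$ into clean scalar lower bounds on the correct partial sums of the squared entries of $S$. The symplectic group is noncompact and its elements are far from (sub)stochastic, so $M = S \circ S$ is \emph{not} doubly stochastic and one cannot directly apply Birkhoff--von Neumann. The delicate point is that the diagonal blocks of the constraint couple $A$ with $E$ and $B$ with $C$ rather than giving a self-contained bound on a single Hadamard square; the resolution is to use Cauchy--Schwarz together with the symmetry relations $AB^\intercal = BA^\intercal$ and $CE^\intercal = EC^\intercal$ to decouple them, at the cost of some care in tracking which partial sums are controlled. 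Making this decoupling precise, and verifying that the resulting bounds are exactly those needed by the rearrangement inequality (rather than weaker ones), is where the real work lies.
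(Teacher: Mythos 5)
Your overall skeleton (expand the trace into a weighted sum of squared entries of $S$, constrain the entrywise square via the symplectic relation, finish with a rearrangement argument, check that $S=\mathbb{1}$ saturates) is the same as the paper's, but the specific constraint you plan to extract is too weak, and this is a genuine gap rather than a matter of care. From $SJS^\intercal=J$ and $S^\intercal J S=J$ one can indeed show, via AM--GM exactly as you suggest, that the combined Hadamard-square matrix $\tilde s_{ij}=\frac12\bigl(a_{ij}^2+b_{ij}^2+c_{ij}^2+e_{ij}^2\bigr)$ has all row and column sums at least $1$; that part of your plan goes through. The problem is the final step: row and column sums $\geq 1$ do \emph{not} imply $\sum_{i,j}\alpha_i^\uparrow \tilde s_{ij}\beta_j^\downarrow \geq \sum_i \alpha_i^\uparrow\beta_i^\downarrow$, so no rearrangement-type inequality can be "fed" these bounds. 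Concretely, take $n=3$, $\alpha=(0,1,1)$, $\beta=(1,1,0)$ and
\begin{equation*}
M=\begin{pmatrix} 1 & 1 & 0\\ 0 & 0 & 1\\ 0 & 0 & 1\end{pmatrix},
\end{equation*}
which is nonnegative with all row and column sums $\geq 1$, yet $\sum_{i,j}\alpha_i M_{ij}\beta_j = 0 < 1 = \sum_i \alpha_i\beta_i$. Hence any argument that uses only those lower bounds on row/column sums must fail.

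What is actually needed, and what the paper invokes, is the strictly stronger structural fact (Theorem~6 of Bhatia and Jain, cited in the paper as \cref{lemma_dio_bathia}) that $\tilde s$ \emph{entrywise dominates some doubly stochastic matrix} $Q$. The paper then writes $Q$ as a convex combination of permutation matrices by Birkhoff's theorem and applies the rearrangement inequality permutation by permutation, with $S=\mathbb{1}$ achieving equality. Note that entrywise domination of a doubly stochastic matrix cannot be recovered from your constraints: the matrix $M$ above satisfies them but dominates no doubly stochastic matrix (any such $Q$ would have rows $2$ and $3$ supported on column $3$, forcing that column to sum to $2$). Your "main obstacle" paragraph correctly senses that the decoupling step is where the real work lies, but the route you sketch (partial-sum bounds plus rearrangement) cannot be completed; closing the gap requires proving the entrywise-domination statement itself, i.e., essentially re-deriving the Bhatia--Jain theorem, which is the nontrivial symplectic input of the whole proof.
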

Before presenting the proof of the above lemma, let us introduce some preliminary results.

    \begin{Def}[(Doubly stochastic matrix)] \label{a_24}
        An ${n\times n}$ matrix ${B = (b_{ij})}$ is \textit{doubly stochastic} if 
        \begin{align} \label{a_22}
            & b_{ij} \geq 0,\quad \forall\, i,j = 1,\,\dots,\, n\,,
        \end{align}
        and 
        \begin{align} \label{a_23}
             \sum_{i=1}^n b_{ij}&=1,\quad\forall\, j = 1,\, \dots,\, n\,\\
            \sum_{j=1}^n b_{ij}&=1\,,\quad\forall\, i = 1,\, \dots,\, n.
        \end{align}
    \end{Def}
    From the above definition, it is easy to see that if a matrix ${B}$ is doubly stochastic, then its transpose ${B^\intercal}$ is doubly stochastic, too.
    According to Birkhoff's Theorem~\cite{puntanen2011inequalities}, $n\times n$ doubly stochastic matrices are convex combinations of permutation matrices ${\{P_\pi\}}$, i.e.~$n\times n$ binary matrices with exactly one entry of ${1}$ in each row and column, and all other entries ${0}$.
    Mathematically, for any $n\times n$ doubly stochastic matrix ${B}$ there exists a probability distribution $\{\lambda_{\pi}\}$ such that~\cite{puntanen2011inequalities}
    \begin{align} \label{a_21}
        & B = \sum_{\pi}\lambda_\pi P_\pi \,,
    \end{align}
    where the sum is over all permutations $\pi$ of $n$ elements. Let us state the following useful result proved in~\cite[Theorem 6]{10.1063/1.4935852}.
    \begin{lemma}[(Theorem 6 of~\cite{10.1063/1.4935852})]\label{lemma_dio_bathia}
   Let $S$ be an $2n\times 2n$ symplectic matrix and let ${A = (A_{ij})}$, ${B = (b_{ij})}$,  ${C = (c_{ij})}$ , ${G = (g_{ij})}$ be $n\times n$ matrices such that
    \bb \label{v1_05}
        &S = \begin{bmatrix} A  & B \\ C & G \end{bmatrix}\,.
    \ee
    Moreover, let ${\tilde{S} = (\tilde{s}_{ij})}$  be an $n\times n$ matrix defined as
    \bb \label{v1_01}
        & \tilde s_{ij}\coloneqq \frac{1}{2}(a^2_{i j} + b^2_{i j} + c^2_{i j} + g^2_{i j})\,.
    \ee
    Then, there exists an $n\times n$ doubly stochastic matrix ${Q}=(q_{ij})$ such
    \bb \label{v1_02}
        & \tilde s_{ij}\geq q_{ij}
    \ee
    for all $i,j =1, \dots,n$.
    \end{lemma}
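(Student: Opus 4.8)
The plan is to recast the statement as the assertion that the nonnegative matrix $\tilde S=(\tilde s_{ij})$ is \emph{doubly superstochastic}, i.e.\ that it dominates entrywise some doubly stochastic matrix, and then to verify the combinatorial feasibility criterion for such a domination. Concretely, by a max-flow/min-cut (Hall-type) argument applied to the transportation problem whose row- and column-demands all equal $1$ and whose capacities are the $\tilde s_{ij}$, a doubly stochastic $Q$ with $q_{ij}\le\tilde s_{ij}$ exists if and only if
\[
  \sum_{i\in R,\ j\in K}\tilde s_{ij}\ \ge\ |R|+|K|-n \qquad\text{for all } R,K\subseteq\{1,\dots,n\}.
\]
So the whole problem reduces to this family of inequalities, after which Birkhoff's theorem furnishes the required $Q$.

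The key reformulation I would use is that each partial sum is a squared Frobenius norm of a \emph{mode-aligned} submatrix of $S$. Writing $P_R$ for the orthogonal projection onto the coordinates $\{i:i\in R\}\cup\{n+i:i\in R\}$ and $\Pi_K$ for the analogous projection on columns, a direct expansion of the blocks $A,B,C,G$ gives
\[
  \sum_{i\in R,\ j\in K}\tilde s_{ij}=\tfrac12\,\lVert P_R\,S\,\Pi_K\rVert_F^2 ,
\]
and, crucially, both $P_R$ and $\Pi_K$ commute with $J$. The extreme cases are easy: taking $R$ (resp.\ $K$) to be the full index set reduces the claim to the statement that every column (resp.\ row) sum of $\tilde S$ is $\ge1$. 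This follows by applying Cauchy--Schwarz to the diagonal entries of the symplectic identities $AG^\intercal-BC^\intercal=\mathbb 1$ (from $SJS^\intercal=J$) and $A^\intercal G-C^\intercal B=\mathbb 1$ (from $S^\intercal JS=J$): the relation $\langle A_{i,:},G_{i,:}\rangle-\langle B_{i,:},C_{i,:}\rangle=1$ forces $\sum_j\tilde s_{ij}\ge1$, and the transposed identity gives the column sums.

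For proper subsets with $|R|+|K|>n$ the inequality is the genuine content. A first step is a dimension count: since $S$ is invertible, the kernel of $v\mapsto P_RSv$ restricted to $\ran\Pi_K$ has dimension at most $2(n-|R|)$, so $P_RS\Pi_K$ has rank at least $2(|R|+|K|-n)$, hence at least that many nonzero singular values. The hard part—and the main obstacle—is that rank is not enough: one must show the squares of these singular values sum to at least $2(|R|+|K|-n)$, i.e.\ that they cannot all be small. A purely spectral estimate fails, because the smallest eigenvalues of the principal block $\Pi_KS^\intercal S\Pi_K$ can lie well below $1$ (squeezing produces eigenvalues $\sigma^{\pm2}$), and naive interlacing returns only the sum of the smallest eigenvalues of that block, which may be $<2(|R|+|K|-n)$ on an isotropic subspace. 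The point to exploit is that the compression is \emph{mode-aligned}: such a projection can never concentrate on the dangerous isotropic directions alone, but always pairs a squeezed direction with its conjugate, so each pair contributes $\sigma^2+\sigma^{-2}\ge2$.

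To make this precise I would pass to the Bloch--Messiah (Euler) decomposition $S=O_1ZO_2$, with $O_1,O_2$ orthogonal-symplectic and $Z$ a pure squeezing. There $\tilde s(O_i)=\lvert U_i\rvert^{\circ2}$ is unistochastic, hence doubly stochastic; $\tilde s(Z)=\diag(\cosh2r_j)\succeq\mathbb 1$ is superstochastic; and superstochasticity is preserved under left/right multiplication by doubly stochastic matrices. The remaining delicate point—and the technical heart of the argument, carried out in the cited \cite{10.1063/1.4935852}—is that $\tilde s$ is \emph{not} multiplicative, so the passage from these factors to $\tilde s(S)$ must be effected through a majorization estimate rather than an entrywise identity; this is precisely the step I expect to absorb most of the work.
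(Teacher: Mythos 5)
First, a remark on the target: the paper itself does \emph{not} prove this lemma --- it imports it verbatim, with the citation ``Theorem 6 of~\cite{10.1063/1.4935852}'', so the only ``proof'' in the paper is that reference. Your proposal, judged as a standalone argument, has a genuine gap. The parts that are correct: the reformulation as double superstochasticity of $\tilde S$; the max-flow/min-cut feasibility criterion $\sum_{i\in R,\,j\in K}\tilde s_{ij}\ge |R|+|K|-n$ for all $R,K\subseteq\{1,\dots,n\}$; the identity $\sum_{i\in R,\,j\in K}\tilde s_{ij}=\tfrac12\lVert P_R S\,\Pi_K\rVert_F^2$; and the boundary cases $R$ or $K$ equal to the full index set, which indeed follow from $AG^\intercal-BC^\intercal=\mathbb{1}$ and $A^\intercal G-C^\intercal B=\mathbb{1}$ together with AM--GM. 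But those boundary cases carry essentially none of the content: a nonnegative matrix can have every row and column sum $\ge 1$ and still dominate no doubly stochastic matrix, e.g.\ $\smatrix{0&1&1\\1&0&0\\1&0&0}$, which violates the criterion at $R=K=\{2,3\}$. The whole lemma \emph{is} the subset inequality for $|R|+|K|>n$, and your proposal never proves it: the dimension count yields only $\mathrm{rank}(P_R S\Pi_K)\ge 2(|R|+|K|-n)$, which you yourself concede is insufficient; the assertion that mode-aligned compressions ``pair each squeezed direction with its conjugate, so each pair contributes $\sigma^2+\sigma^{-2}\ge 2$'' is a heuristic with no argument behind it (you establish no relation between the singular values of $P_R S\Pi_K$ and the Bloch--Messiah squeezing parameters); and the Bloch--Messiah factorization route is abandoned by your own observation that $S\mapsto\tilde S$ is not multiplicative, with no majorization substitute actually supplied.

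The decisive defect is your closing move: the ``technical heart of the argument'' is delegated to the cited work, but Theorem 6 of that work is precisely the statement to be proved, so as a proof your text is circular --- what you have produced is a correct reformulation plus verification of the trivial extreme cases, i.e.\ a proof plan whose central step is missing. To close the gap you must actually establish $\sum_{i\in R,\,j\in K}\tilde s_{ij}\ge |R|+|K|-n$ for all subsets. One concrete avenue is to pass to the complex (Bogoliubov) representation, where $\tilde s_{ij}=|\alpha_{ij}|^2+|\beta_{ij}|^2$ with $\alpha=\tfrac12\bigl[(A+G)+i(C-B)\bigr]$, $\beta=\tfrac12\bigl[(A-G)+i(C+B)\bigr]$, and the constraints $\alpha\alpha^\dagger-\beta\beta^\dagger=\mathbb{1}$, $\alpha^\dagger\alpha-\beta^\intercal\bar{\beta}=\mathbb{1}$ replace the real block identities (this already gives the row and column sums instantly, since $\sum_j\tilde s_{ij}=1+2\sum_j|\beta_{ij}|^2$); but however it is carried out, that step is the proof, and it is absent here.
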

We are now ready to prove~\cref{lem_min_soff}.

\begin{proof}[Proof of~\cref{lem_min_soff}]
    By a direct calculation and employing the notation introduced in~\cref{lemma_dio_bathia}, one can easily verify that
    \bb
        \Tr[D_1^\uparrow S D_2^\downarrow S^\intercal]= 2\sum_{i,j=1}^n \alpha_i^\uparrow \tilde s_{ij} \beta^{\downarrow}_j \,.
    \ee
    Moreover, thanks to~\cref{lemma_dio_bathia}, there exists a probability distribution $(p_\pi)$ such that
    \bb
        \tilde{s}_{ij}\ge \sum_\pi\lambda_\pi (P_\pi)_{ij}\,\qquad\forall\,i,j=1,\ldots,n\,,
    \ee
    where the sum is over all the permutations $\pi$ of $n$ elements. Hence, we obtain that
    \bb
        \Tr[D_1^\uparrow S D_2^\downarrow S^\intercal]&\ge 2\sum_{\pi}\lambda_\pi\sum_{i,j=1}^n \alpha_i^\uparrow (P_k)_{ij} \beta^{\downarrow}_j\\
        & \geqt{(i)} 2\sum_{\pi}\lambda_\pi\sum_{i,j=1}^n \alpha_i^\uparrow \beta^{\downarrow}_j\\
        &=2\sum_{i,j=1}^n \alpha_i^\uparrow \beta^{\downarrow}_j\\
        &=\Tr[D_1^\uparrow D_2^\downarrow]\,,
    \ee
    where in (i) we employed the rearrangement inequality. Consequently, since the identity is a symplectic matrix, we conclude that
    \bb
        \min_{S\in\mathrm{Sp}'(2n)}\Tr[D_1^\uparrow S D_2^\downarrow S^\intercal]=\Tr[D_1^\uparrow  D_2^\downarrow ]\,.
    \ee

\end{proof}
\subsection{Comment on the anti-ergotropy}
Analogously to the finite-dimensional setting, one may be interested in the analysis of the so-called \emph{anti-ergotropy}~\cite{PhysRevLett.131.030402}. The anti-ergotropy, denoted as $\mathcal{A}^{(\hat{H})}(\rho)$, is defined as the maximum energy --- measured by a Hamiltonian $\hat{H}$ --- that can be injected into a quantum system initialised in the state $\rho$ by acting with unitary operations only:
\bb
\mathcal{A}^{(\hat{H})}(\rho)\coloneqq\sup_{U}\left(\Tr[\hat{H}U\rho U^\dagger]-\Tr[\hat{H}\rho]\right)\,,
\ee
where the supremum is performed over all possible unitaries. However, since the spectrum of bosonic quadratic Hamiltonians is unbounded, it follows that the anti-ergotropy is always infinite for bosonic quadratic Hamiltonians.

What about if we restrict such an optimisation to the smaller, yet practically relevant, subset of Gaussian unitaries? Following this idea, analogously to what we did in the previous section, we may define \emph{Gaussian anti-ergotropy}. The Gaussian anti-ergotropy, denoted as $\mathcal{A}_G^{(\hat{H})}(\rho)$, is defined as the maximum energy --- measured by an Hamiltonian $\hat{H}$ --- that one can inject into a quantum system initialised in the state $\rho$ by acting with \emph{Gaussian} unitaries only:
\bb
\mathcal{A}^{(\hat{H})}(\rho)\coloneqq\sup_{U\in\mathcal{G}}\left(\Tr[\hat{H}U\rho U^\dagger]-\Tr[\hat{H}\rho]\right)\,,
\ee
where the supremum is restricted to the set $\mathcal{G}$ of all possible Gaussian unitaries. However, it is simple to observe that the Gaussian anti-ergotropy is still infinite for bosonic quadratic Hamiltonians. This is so because one could take, e.g., a suitable Gaussian unitary with infinite squeezing or infinite displacement. In conclusion, both the anti-ergotropy and the Gaussian anti-ergotropy are infinite for bosonic systems governed by quadratic Hamiltonians. 
This makes the analysis of the anti-ergotropy of bosonic systems less interesting than that of its finite-dimensional counterpart~\cite{PhysRevLett.131.030402}.

\section{Additivity of Gaussian ergotropy}
\label{sec:additivity}

In this section, we prove that for a quadratic Hamiltonian $\hat{H}$ the Gaussian ergotropy is additive. This proof is fundamental to give a complete characterisation of the quantities defined in~\cref{eq:2def}.

\begin{lemma}[(Additivity of ${\mathcal E^{(\hat{H})}_G}$ for quadratic Hamiltonians)] \label{m_22}
    Given a Hamiltonian of the form in~\cref{eq:ham_quad}, the Gaussian ergotropy ${\mathcal E_G (\rho)}$ is additive, i.e.,
    \begin{align} \label{m_21}
        & \mathcal E_G^{(\hat{H}^{(n)}_{\mathrm{tot}})} (\rho^{\otimes n}) = n\, \mathcal E_G^{(\hat{H})} (\rho).
    \end{align}
    Thus, it saturates the general lower bound ${\mathcal E^{(\hat{H}^{(n)}_{\mathrm{tot}})} (\rho^{\otimes n} ) \geq n \mathcal E^{(\hat{H})} (\rho)}$.
\end{lemma}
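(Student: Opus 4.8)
The plan is to invoke the closed formula of \cref{v1_010}, which applies to both sides of \cref{m_21}: since $h$ is strictly positive, the Hamiltonian matrix of $\hat{H}^{(n)}_{\mathrm{tot}}$ is strictly positive as well. First I would record the data of the $n$-copy problem. Writing the single system with $N$ modes (to distinguish the number of modes from the number $n$ of copies), the total Hamiltonian $\hat{H}^{(n)}_{\mathrm{tot}}$ of \cref{eqhtot} has Hamiltonian matrix and displacement vector given by the direct sums
\bb
h^{(n)} = \bigoplus_{i=1}^n h\,, \qquad \textbf{r}^{(n)} = \bigoplus_{i=1}^n \textbf{r}\,,
\ee
while the covariance matrix and first moment of $\rho^{\otimes n}$ are
\bb
V(\rho^{\otimes n}) = \bigoplus_{i=1}^n V(\rho)\,, \qquad \textbf{m}(\rho^{\otimes n}) = \bigoplus_{i=1}^n \textbf{m}(\rho)\,.
\ee

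Next I would establish the structural fact that the symplectic eigenvalues of a direct sum $A\oplus\cdots\oplus A$ ($n$ copies) are exactly those of $A$, each with multiplicity $n$. This follows from \cref{v1_03}: if $A=SDS^\intercal$ is a Williamson decomposition, then $\bigoplus_i A = \left(\bigoplus_i S\right)\left(\bigoplus_i D\right)\left(\bigoplus_i S\right)^\intercal$, where $\bigoplus_i S$ is symplectic with respect to the total symplectic form $\bigoplus_i \Omega$ because $\left(\bigoplus_i S\right)\left(\bigoplus_i \Omega\right)\left(\bigoplus_i S\right)^\intercal = \bigoplus_i \left(S\Omega S^\intercal\right) = \bigoplus_i \Omega$, and $\bigoplus_i D$ is of the required normal form. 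Applying this to $h$ and to $V(\rho)$ determines the symplectic spectra of $h^{(n)}$ and $V(\rho^{\otimes n})$.

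I would then substitute into \cref{v1_010}. The first two terms split trivially across the direct sum, giving $\frac14\Tr[h^{(n)}V(\rho^{\otimes n})] = \frac n4\Tr[hV(\rho)]$ and $\frac12(\textbf{m}(\rho^{\otimes n})-\textbf{r}^{(n)})^\intercal h^{(n)}(\textbf{m}(\rho^{\otimes n})-\textbf{r}^{(n)}) = \frac n2(\textbf{m}(\rho)-\textbf{r})^\intercal h(\textbf{m}(\rho)-\textbf{r})$. The step I expect to be the crux is the symplectic-eigenvalue term, where one must verify that the increasingly ordered list for $h^{(n)}$ pairs correctly against the decreasingly ordered list for $V(\rho^{\otimes n})$. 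Because each symplectic eigenvalue occurs with multiplicity $n$, the increasing list is $d_1^\uparrow(h)$ repeated $n$ times, then $d_2^\uparrow(h)$ repeated $n$ times, and so on, and likewise the decreasing list for $V(\rho^{\otimes n})$ is $d_1^\downarrow(V(\rho))$ repeated $n$ times, and so on. These blocks of size $n$ align index-for-index, so that
\bb
\frac12 \sum_{j=1}^{Nn} d_j^\uparrow(h^{(n)})\, d_j^\downarrow\!\left(V(\rho^{\otimes n})\right) = \frac{n}{2} \sum_{j=1}^N d_j^\uparrow(h)\, d_j^\downarrow\!\left(V(\rho)\right)\,.
\ee

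Summing the three contributions yields $\mathcal{E}_G^{(\hat{H}^{(n)}_{\mathrm{tot}})}(\rho^{\otimes n}) = n\,\mathcal{E}_G^{(\hat{H})}(\rho)$, the claimed additivity. Finally, the saturation remark follows because applying the single-copy optimal Gaussian unitary $\bar{G}$ independently on each copy is itself a (product) Gaussian unitary, which already realises $\mathcal{E}_G^{(\hat{H}^{(n)}_{\mathrm{tot}})}(\rho^{\otimes n}) \geq n\,\mathcal{E}_G^{(\hat{H})}(\rho)$; the equality just established shows that no global Gaussian entangling operation across the copies can do better.
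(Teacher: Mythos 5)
Your proof is correct and follows essentially the same route as the paper's: both apply the closed formula of \cref{v1_010} to the $n$-copy problem and exploit the direct-sum structure of $h^{(n)}$, $\textbf{r}^{(n)}$, $V(\rho^{\otimes n})$ and $\textbf{m}(\rho^{\otimes n})$ together with the fact that symplectic eigenvalues of direct sums are the unions (with multiplicity) of the summands' symplectic spectra. If anything, your treatment is slightly more careful than the paper's, since you explicitly verify the Williamson decomposition of a direct sum and the block-by-block alignment of the increasingly ordered list for $h^{(n)}$ against the decreasingly ordered list for $V(\rho^{\otimes n})$, a pairing the paper's Eq.~\eqref{a_41} asserts without comment.
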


\begin{proof}
    Consider the tensor product Gaussian states ${\rho^{\otimes n}}$ with a total Hamiltonian given by the sum of local Hamiltonians. 
    In the formalism of continuous variable systems, such Hamiltonian can be expressed as
    \bb {\hat{H}^{(n)} = \frac{1}{2} (\hat {\mathbf R}^{(n)} - \textbf r^{(n)})^\intercal h^{(n)} (\hat{\mathbf R}^{(n)} - \textbf r^{(n)}) },
    \ee where ${\textbf r^{(n)} = \bigoplus_{k=1}^n \textbf r}$, ${\hat{\mathbf R}}^{(n)} = \bigoplus_{k=1}^n \hat {\mathbf R}$ and ${h^{(n)} = \bigoplus_{k=1}^n h}$.
    After we apply \cref{v1_010}, the Gaussian ergotropy of this system becomes
    \begin{align} \label{a_40}
        & \mathcal E_G^{(\hat{H}^{(n)}_{\mathrm{tot}})} (\rho^{\otimes n}) = E (\rho^{\otimes n}) - \min_{U\in\mathcal{G}} \Tr[\hat H U \rho U^\dagger] = E (\rho^{\otimes n}) - \sum_{j=1}^n \nu_j ^\downarrow (h^{(n)}) \nu_j ^\uparrow (V(\rho^{\otimes n})).
    \end{align}
To prove additivity, we show that both the energy and the energy of the Gaussian-passive state (the first and second terms on the right-hand side of~\cref{a_40}) are additive.
The first statement follows directly, as the Hamiltonian $\hat{H}$ is a sum of local operators and the state is factorised:
    \bb {E(\rho^{\otimes n}) = \Tr[\hat H_{\mathrm{tot}}^{(n)} \rho^{\otimes n}] = \sum_{k=1}^n \frac{1}{2} \Tr[(\hat{\mathbf R} - \textbf r )^\intercal h (\hat{\mathbf R} - \textbf r) \rho ] = n E(\rho) }. \ee
To prove the second statement we use that the symplectic eigenvalues of a direct sum of operators are the direct sum of the symplectic eigenvalues of the respective operators, that is  
\begin{equation} 
d_j^{\downarrow/\uparrow} \Big(\bigoplus_{k=1}^n A_k\Big) =  \bigoplus_{k=1}^n d_j^{\downarrow/\uparrow}(A_k).  \end{equation}
Finally, the covariance matrix of a product of Gaussian states is the direct sum of the covariance matrices of the states
\begin{equation}
    {V(\rho^{\otimes n}) = \bigoplus_{k=1}^n V(\rho) }.
\end{equation}
 Combining these results, we have  
    \begin{align} \label{a_41}
        & \mathcal E_G^{(\hat{H}^{(n)}_{\mathrm{tot}})} (\rho^{\otimes n}) = E (\rho^{\otimes n}) - \sum_{j=1}^n d_j ^\downarrow (h^{(n)}) d_j ^\uparrow (V(\rho^{\otimes n})) = n E (\rho) - \sum_{k=1}^n \sum_{j=1}^n d_j ^\downarrow (h) d_j ^\uparrow (V(\rho)) \nonumber\\
        & = n \Big( E(\rho) - \sum_{j=1}^n d_j ^\downarrow (h) d_j ^\uparrow (V(\rho)) \Big) = n \mathcal E_G (\rho),
    \end{align}
that concludes the proof.
\end{proof}
Let us now recall the definition of total ergotropy given in~\cref{m_28.1}. A direct consequence of~\cref{m_22} is the following%
\begin{cor} [(Gaussian bound ergotropy is identically zero)]
Given a quadratic Hamiltonian $\hat{H}$ and a generic quantum state $\rho$, the total Gaussian ergotropy defined in~\cref{m_28.1} is equal to the Gaussian ergotropy, that is
    \begin{equation}
        \mathcal{E}^{(\hat{H})}_\text{G, tot}(\rho) = \mathcal{E}^{(\hat{H})}_{G}(\rho).
    \end{equation}
    In addition, the Gaussian bound ergotropy and the total non-Gaussian work potential defined in~\cref{eq:2def} satisfy
    \begin{equation} \label{eq:closure}
        \mathcal{B}_G^{(\hat H)}(\rho) = 0, \quad  \quad \Delta_{\mathrm{tot}}^{(\hat H)}(\rho) =  \Delta^{(\hat{H})}(\rho) +  \mathcal{B}^{(\hat H)}(\rho),
    \end{equation}
    where $\Delta^{(\hat{H})}(\rho)$ and $\mathcal{B}^{(\hat H)}(\rho)$ are given in~\cref{eq:NGWP} and~\cref{eq:bounderg}, respectively.
\end{cor}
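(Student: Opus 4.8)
The plan is to obtain all three assertions as immediate algebraic consequences of the additivity established in~\cref{m_22}. First I would insert the additivity identity $\mathcal{E}_G^{(\hat{H}^{(n)}_{\mathrm{tot}})}(\rho^{\otimes n}) = n\, \mathcal{E}_G^{(\hat{H})}(\rho)$ into the definition of the Gaussian total ergotropy in~\cref{m_28.1}, giving
\begin{equation}
    \mathcal{E}^{(\hat{H})}_\text{G,tot}(\rho) = \lim_{n\to\infty}\frac{1}{n}\,n\,\mathcal{E}^{(\hat{H})}_{G}(\rho) = \mathcal{E}^{(\hat{H})}_{G}(\rho).
\end{equation}
This already yields the first displayed identity; note that the sequence $\tfrac{1}{n}\mathcal{E}_G^{(\hat{H}^{(n)}_{\mathrm{tot}})}(\rho^{\otimes n})$ is in fact constant in $n$, so the limit exists trivially rather than merely by a superadditivity argument.

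The second claim $\mathcal{B}_G^{(\hat H)}(\rho)=0$ then follows by substituting this equality into the definition $\mathcal{B}_G^{(\hat H)}(\rho)=\mathcal{E}^{(\hat{H})}_\text{G,tot}(\rho)-\mathcal{E}^{(\hat{H})}_G(\rho)$ from~\cref{eq:2def}, since the two terms now coincide.

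For the third claim I would begin from $\Delta^{(\hat{H})}_{\mathrm{tot}}(\rho)=\mathcal{E}^{(\hat{H})}_{\mathrm{tot}}(\rho)-\mathcal{E}^{(\hat{H})}_\text{G,tot}(\rho)$, replace $\mathcal{E}^{(\hat{H})}_\text{G,tot}(\rho)$ by $\mathcal{E}^{(\hat{H})}_{G}(\rho)$ using the identity just proved, and add and subtract $\mathcal{E}^{(\hat{H})}(\rho)$ to regroup as
\begin{equation}
    \Delta^{(\hat{H})}_{\mathrm{tot}}(\rho)=\big(\mathcal{E}^{(\hat{H})}_{\mathrm{tot}}(\rho)-\mathcal{E}^{(\hat{H})}(\rho)\big)+\big(\mathcal{E}^{(\hat{H})}(\rho)-\mathcal{E}^{(\hat{H})}_{G}(\rho)\big),
\end{equation}
which are exactly $\mathcal{B}^{(\hat H)}(\rho)$ and $\Delta^{(\hat{H})}(\rho)$ by the definitions in~\cref{eq:bounderg} and~\cref{eq:NGWP}.

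There is no genuine obstacle in this corollary: all the mathematical substance is carried by the additivity lemma~\cref{m_22}, and what remains is pure bookkeeping with the definitions collected in~\cref{eq:NGWP,eq:bounderg,eq:2def,m_28.1}. The only point deserving a word of care is the existence of the limit defining $\mathcal{E}^{(\hat{H})}_\text{G,tot}$, which additivity settles at once by making the relevant sequence constant; accordingly I would keep the argument to a few lines rather than elaborate it further.
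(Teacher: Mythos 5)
Your proposal is correct and matches the paper's intended argument: the paper presents this corollary as a ``direct consequence'' of the additivity lemma (\cref{m_22}), which is precisely what you carry out --- substituting additivity into \cref{m_28.1} to collapse the limit, and then regrouping the definitions in \cref{eq:NGWP,eq:bounderg,eq:2def} by adding and subtracting $\mathcal{E}^{(\hat{H})}(\rho)$. Your remark that additivity makes the sequence $\tfrac{1}{n}\mathcal{E}_G^{(\hat{H}^{(n)}_{\mathrm{tot}})}(\rho^{\otimes n})$ constant, so the limit exists trivially, is a correct and worthwhile point of care.
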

An interesting consequence of this corollary is that, when dealing with quadratic Hamiltonians and Gaussian operations, operations that create correlations between different copies of the state provide no advantage.
This is fundamentally different, for example, from what is observed in the case of incoherent operations~\cite{PhysRevLett.125.180603}, where the 
n-copies framework outperforms the single-copy scenario due to the accumulation of correlations.

\section{Entropic relations for the Non-Gaussian Work Potential}
\label{sec:entropy}

The ergotropy and the non-equilibrium free energy are closely related thermodynamic resources. In particular, the gap between total ergotropy and ergotropy (the bound ergotropy in~\cref{eq:bounderg}) coincides exactly with the non-equilibrium free energy with temperature $\beta^*$.
It is natural to ask whether a connection between entropic and ergotropic quantities emerges by examining another gap, the non-Gaussian work potential $\Delta^{(\hat{H})}(\rho)$. 
This section aims to:  
(i) establish a connection between ergotropic and entropic quantities, specifically linking the total non-Gaussian work potential to the relative entropy between the Gaussian passive state and the thermal state with inverse temperature matching the entropy of the initial state (see~\cref{m_24});  
(ii) derive general bounds for the non-Gaussian work potential in terms of entropic measures of non-Gaussianity (see~\cref{lemma:eqdelrho}); and  
(iii) provide bounds on ergotropy and total ergotropy in terms of their Gaussian counterparts (see~\cref{theo:ent-bounds} and \cref{cor:totb}).

\begin{thm}[(Explicit form of $\Delta_{\mathrm{tot}}^{(\hat H)}(\rho)$)] \label{m_24}
    For any quantum state $\rho$ and quadratic Hamiltonian $\hat{H}$ of the form given in~\cref{eq:ham_quad} we have the total non-Gaussian work potential can be expressed as
    \bb \label{m_30}
        & \Delta_{\mathrm{tot}}^{(\hat H)}(\rho) = E(\rho_G^{\downarrow}) - E (\tau_{\beta^*}) = \frac{1}{\beta^{*}} D (\rho_G^{\downarrow}||\tau_{\beta^{*}})\,,
    \ee
    where $\beta^*$ is the inverse temperature such that $S(\rho) = S(\tau_{\beta^*})$
     (see below~\cref{m_28}) and $\rho_{G}^{\downarrow}$ is the Gaussian-passive state of $\rho$.
    The relative entropy in \cref{m_30} can be evaluated analytically:
    \bb \label{m_31}
        & \Delta_{\mathrm{tot}}^{(\hat H)}(\rho) =\frac{1}{2} \sum_{j=1}^n d_j ^\uparrow (h) \Bigg( d_j ^\downarrow (V(\rho)) -  \coth\Big( \frac{\beta^* d_j^\uparrow (h)}{2} \Big ) \Bigg),
    \ee
    where $d^{\uparrow}_j(h)$ are the symplectic eigenvalues of $h$ (see~\cref{eq:ham_quad}) arranged in increasing order and  $d^{\downarrow}_j(V(\rho))$ are the symplectic eigenvalues of the covariance matrix of $\rho$ arranged in decreasing order. 
\end{thm}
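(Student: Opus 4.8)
The plan is to prove the two equalities in \eqref{m_30} first and then evaluate each energy term explicitly to reach \eqref{m_31}. The first equality is pure bookkeeping of the ergotropies introduced earlier: by the preceding corollary the Gaussian total ergotropy collapses to the single-copy one, so $\mathcal{E}_{G,\mathrm{tot}}^{(\hat H)}(\rho) = \mathcal{E}_G^{(\hat H)}(\rho) = E(\rho) - E(\rho_G^{\downarrow})$, while $\mathcal{E}_{\mathrm{tot}}^{(\hat H)}(\rho) = E(\rho) - E(\tau_{\beta^*})$ by \eqref{m_28}. Subtracting as in the definition \eqref{eq:2def} cancels the common term $E(\rho)$ and gives $\Delta_{\mathrm{tot}}^{(\hat H)}(\rho) = E(\rho_G^{\downarrow}) - E(\tau_{\beta^*})$.

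For the second equality I would expand the relative entropy using $\ln \tau_{\beta^*} = -\beta^* \hat H - (\ln Z)\,\mathbb{1}$ with $Z = \Tr[e^{-\beta^* \hat H}]$, obtaining $D(\rho_G^{\downarrow}\|\tau_{\beta^*}) = -S(\rho_G^{\downarrow}) + \beta^* E(\rho_G^{\downarrow}) + \ln Z$. The key observation is that $\rho_G^{\downarrow} = \bar G \rho \bar G^{\dagger}$ is unitarily equivalent to $\rho$, so $S(\rho_G^{\downarrow}) = S(\rho) = S(\tau_{\beta^*})$ by the very definition of $\beta^*$. Writing the analogous expansion for the Gibbs state itself, $\ln Z = S(\tau_{\beta^*}) - \beta^* E(\tau_{\beta^*})$, and substituting, the entropy terms cancel and one is left with $D(\rho_G^{\downarrow}\|\tau_{\beta^*}) = \beta^*\big(E(\rho_G^{\downarrow}) - E(\tau_{\beta^*})\big)$, which is exactly \eqref{m_30}.

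It then remains to compute the two energies. For $E(\rho_G^{\downarrow})$ I would simply invoke \cref{v1_010}, which yields $E(\rho_G^{\downarrow}) = \tfrac12 \sum_j d_j^\uparrow(h)\, d_j^\downarrow(V(\rho))$. For $E(\tau_{\beta^*})$ the thermal state is Gaussian, so by \cref{lem_energy} (noting that $\tau_{\beta^*}$ inherits first moment $\mathbf{r}$ from the invariance of $\hat H$ under $\hat{\mathbf R} \mapsto 2\mathbf{r} - \hat{\mathbf R}$, which kills the displacement contribution) we have $E(\tau_{\beta^*}) = \tfrac14 \Tr[h\, V(\tau_{\beta^*})]$. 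Diagonalizing $h = SDS^\intercal$ via Williamson and passing to normal modes, the Gibbs state factorizes into single-mode thermal states whose symplectic eigenvalues are the thermal occupations $\coth(\beta^* d_j^\uparrow(h)/2)$; transforming the covariance matrix back and using cyclicity of the trace gives $E(\tau_{\beta^*}) = \tfrac12 \sum_j d_j^\uparrow(h)\coth(\beta^* d_j^\uparrow(h)/2)$. Subtracting the two energies produces \eqref{m_31}.

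The main obstacle I anticipate is this last computation: correctly identifying the covariance matrix of the Gaussian Gibbs state and in particular pinning down that the symplectic eigenvalue of mode $j$ equals $\coth(\beta^* d_j^\uparrow(h)/2)$ with the right factors. This requires care with the covariance-matrix normalization (so that the vacuum has $V = \mathbb{1}$ and a single mode of frequency $\omega$ has thermal symplectic eigenvalue $\coth(\beta \omega/2)$) and with the direction of the Williamson transformation relating $h$ to $V(\tau_{\beta^*})$. Everything else is routine once the entropy-matching identity $S(\rho_G^{\downarrow}) = S(\tau_{\beta^*})$ is in hand.
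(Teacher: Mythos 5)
Your proposal is correct and follows essentially the same route as the paper's own proof: the first equality in \cref{m_30} via the additivity corollary, the second by expanding $D(\rho_G^{\downarrow}\|\tau_{\beta^*})$ and cancelling entropies through $S(\rho_G^{\downarrow})=S(\rho)=S(\tau_{\beta^*})$, and \cref{m_31} by combining \cref{v1_010} for $E(\rho_G^{\downarrow})$ with the Williamson form of the Gibbs-state covariance matrix, whose symplectic eigenvalues are $\coth\big(\beta^* d_j^\uparrow(h)/2\big)$. There is no gap to flag.
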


\begin{proof}
    Using the definitions of total non-gaussian work potential and the result given in~\cref{eq:closure}, we have
    \begin{align} \label{a_26}
        &\Delta_{\mathrm{tot}}^{(\hat H)}(\rho) = \mathcal B^{(\hat H)}(\rho) +  \Delta^{(\hat H)} (\rho) =\mathcal E_\text{tot}^{(\hat{H})} (\rho) - \mathcal E_G^{(\hat{H})} (\rho) 
        = E(\rho) - E (\tau_{\beta^{*}}) - E(\rho) + E(\rho_G^{\downarrow})  =  E(\rho_G^{\downarrow} ) - E(\tau_{\beta^{*}}). 
    \end{align}
We want to prove that the last member of the chain of equalities above is equal to $(\beta^*)^{-1} D(\rho_G^{\downarrow} || \tau_{\beta^{*}})$.
We use
\begin{equation}
    D(\rho_G^{\downarrow} || \tau_{\beta^*}) =  \Tr[\rho_G^{\downarrow} \ln \rho_G^{\downarrow}] - \Tr[\rho_G^{\downarrow} \ln \tau_{\beta^*}] = - S(\rho_G^{\downarrow}) + \beta^* \Tr[\rho_G^{\downarrow} \hat{H}] + \ln Z_{\beta^*}.
\end{equation}
Using the well-known relation for the logarithm of the partition function 
\begin{equation}
    - \ln Z_{\beta^*} = \beta^* \Tr[\hat{H}\tau_{\beta^*}] - S(\tau_{\beta^*}).  
\end{equation}
We obtain 
\begin{equation} \label{eq:2ent}
 D(\rho_G^{\downarrow}|| \tau_{\beta^*}) =  - S(\rho_G^{\downarrow}) + \beta^* \Tr[\rho_G^{\downarrow} \hat{H}] - \beta^* \Tr[\hat{H}\tau_{\beta^*}] + S(\tau_{\beta^*}).  
\end{equation}
Since the passive state is obtained from the original state only through unitary Gaussian operations, we have $S(\rho) = S(\rho_G^{\downarrow})$.
By definition of $\beta^{*}$ we also have $S(\rho) = S(\tau_{\beta^*})$. This allows us to simplify the two entropies in~\cref{eq:2ent}
and obtain

\begin{equation} \label{eq:2ent2.0}
 D(\rho_G^{\downarrow} || \tau_{\beta^*}) =  \beta^* \Tr[\rho_G^{\downarrow} \hat{H}] - \beta^* \Tr[\hat{H}\tau_{\beta^*}],
\end{equation}
that plugged inside~\cref{a_26} concludes the first part of the proof.
To show~\cref{m_31}
 we take ${E(\rho_G^{\downarrow})}$ as in~\cref{eq:Gpass1,eq:Gpass2}, while ${E (\tau_{\beta^*})}$ is given by 
     \begin{align} \label{m_34}
         & E (\tau_{\beta^*})= \frac{1}{4} \Tr[h V(\tau_{\beta^*})] + \frac{1}{2}(\mathbf m (\tau_{\beta^*}) - \mathbf r)^\intercal h (\mathbf m (\tau_{\beta^*}) -  \mathbf r).
     \end{align}
     For the thermal state, we have~\cite{BUCCO}
     \begin{align} \label{m_36}
         & \mathbf m (\tau_{\beta^*}) = \mathbf r,\quad V(\tau_{\beta^*}) = S \Big( \bigoplus_{j=1}^n \nu_j^\uparrow \mathbb 1_2 \Big) S^\intercal,
     \end{align}
     where
     \begin{align} \label{m_37}
         & \nu_j^\uparrow = \frac{1+e^{-\xi_j^\uparrow}}{1-e^{-\xi_j^\uparrow}}
         =\coth \Big(\frac{\xi_j ^\uparrow}{2}\Big) \geq 1,\quad \xi_j^\uparrow = \beta^* d_j^\uparrow (h).
     \end{align}
     Here, we have used the Williamson's decomposition to write
         $ h = S^{-\intercal} d^\uparrow (h) S^{-1}$
     where ${d_1^\uparrow (h)  \leq \dots \leq d_n^\uparrow(h) }$. 
     From this result, we obtain 
     \begin{align} \label{m_35}
         & E (\tau_{\beta^*})= \frac{1}{4} \sum_{j=1}^{2n} d_j ^\uparrow (h) \Big( \bigoplus_{l=1}^n \nu_l^\uparrow \mathbb 1_2 \Big)  = \frac{1}{2} \sum_{j=1}^n d_j^\uparrow (h) \coth\left(\frac{\beta^* d_j^\uparrow (h)}{2}\right).
     \end{align}
    Combining the above equation with the expression in~\cref{eq:Gpass2} for the Gaussian-passive state, we obtain~\cref{m_31}.
\end{proof}

We can establish another strong result that provides an equivalent formulation of the non-Gaussian work potential, analogous to the one previously proven for coherent and incoherent contributions to ergotropy~\cite{PhysRevLett.125.180603}.
To proceed further, it is essential to introduce the following entropic non-Gaussianity measure~\cite{Marian2013,PhysRevA.82.052341}
\begin{align} \label{eq:mu}
    & \mu(\rho) \coloneqq \inf_{\sigma \in \mathfrak{G}} D(\rho || \sigma) =S(\delta(\rho)) - S(\rho), 
\end{align}
where $\mathfrak{G}$ is the set of Gaussian states and $\delta(\rho)$ is the Gaussian state with the same first and second moments as $\rho$. We will also refer to $\delta(\rho)$ as the {\it Gaussianification} of $\rho$. 
\begin{lemma}[(Equivalent formulation for $\Delta^{(\hat{H})}(\rho)$)]\label{lemma:eqdelrho}
For any quantum state $\rho$ and quadratic Hamiltonian $\hat{H}$ we have that the non-Gaussian work potential can be expressed as
\begin{equation}  \label{eq:2Ds2a}
\beta \Delta^{(\hat{H})}(\rho) 
= \mu(\rho) + D(\delta_G^{\downarrow}(\rho)|| \tau_{\beta})  
- D(\rho^{\downarrow}|| \tau_{\beta} ) .
\end{equation}
where $\beta$ is an inverse temperature (that can be chosen freely), $\rho^{\downarrow}$ is the passive state of $\rho$ and $\delta^{\downarrow}_G(\rho)$ is the Gaussian-passive state associated to its Gaussianification $\delta(\rho)$.
\end{lemma}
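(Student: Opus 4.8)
The plan is to rewrite both energies appearing in $\Delta^{(\hat H)}(\rho) = E(\rho_G^{\downarrow}) - E(\rho^{\downarrow})$ (see~\cref{eq:NGWP}) as relative entropies with respect to the Gibbs state $\tau_\beta = e^{-\beta \hat H}/Z_\beta$, and then subtract. The workhorse is the elementary identity valid for any state $\sigma$,
\begin{equation}
\beta E(\sigma) = D(\sigma || \tau_\beta) + S(\sigma) - \ln Z_\beta,
\end{equation}
which is just the expansion $D(\sigma||\tau_\beta) = -S(\sigma) + \beta \Tr[\hat H \sigma] + \ln Z_\beta$ rearranged, exactly as in the derivation carried out below~\cref{a_26}. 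Multiplying $\Delta^{(\hat H)}(\rho)$ by $\beta$ and applying this identity to the two relevant passive states, the state-independent $\ln Z_\beta$ terms cancel, leaving a difference of two relative-entropy terms plus a difference of two entropies.

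First I would handle the entropy contribution and show it collapses to $\mu(\rho)$. Since $\rho^{\downarrow}$ is obtained from $\rho$ by a global unitary, $S(\rho^{\downarrow}) = S(\rho)$; and since the Gaussian-passive state $\delta_G^{\downarrow}(\rho)$ is obtained from the Gaussianification $\delta(\rho)$ by a Gaussian unitary, $S(\delta_G^{\downarrow}(\rho)) = S(\delta(\rho))$. Hence the two entropy terms combine into $S(\delta(\rho)) - S(\rho)$, which is precisely $\mu(\rho)$ by~\cref{eq:mu}.

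The main point to verify --- and the only step that is not routine thermodynamic bookkeeping --- is the energy replacement $E(\rho_G^{\downarrow}) = E(\delta_G^{\downarrow}(\rho))$. This is what allows the relative-entropy term on the right-hand side of~\cref{eq:2Ds2a} to be written for $\delta_G^{\downarrow}(\rho)$, whose entropy is the convenient one, rather than for the Gaussian-passive state $\rho_G^{\downarrow}$ that literally appears in $\Delta^{(\hat H)}(\rho)$. I would deduce it directly from~\cref{v1_010}: the energy of a Gaussian-passive state equals $\tfrac12 \sum_j d_j^{\uparrow}(h)\, d_j^{\downarrow}(V(\cdot))$ and therefore depends on the state only through its covariance matrix. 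Because $\delta(\rho)$ has by definition the same first and second moments as $\rho$, we have $V(\delta(\rho)) = V(\rho)$, so the two Gaussian-passive energies coincide. I would then apply the energy identity above to $\delta_G^{\downarrow}(\rho)$ in place of $\rho_G^{\downarrow}$.

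Assembling the pieces gives
\begin{equation}
\beta \Delta^{(\hat H)}(\rho) = D(\delta_G^{\downarrow}(\rho) || \tau_\beta) - D(\rho^{\downarrow} || \tau_\beta) + \big( S(\delta(\rho)) - S(\rho) \big) = \mu(\rho) + D(\delta_G^{\downarrow}(\rho) || \tau_\beta) - D(\rho^{\downarrow} || \tau_\beta),
\end{equation}
which is exactly~\cref{eq:2Ds2a}. Since $\beta$ enters only through $\tau_\beta$ together with the cancelling $\ln Z_\beta$, the identity holds for an arbitrary inverse temperature, as claimed. I expect no genuine obstacle beyond cleanly justifying the energy-equality step and the two entropy invariances; the rest is algebraic cancellation.
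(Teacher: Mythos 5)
Your proposal is correct, and its overall structure coincides with the paper's proof: both multiply $\Delta^{(\hat H)}(\rho)=E(\rho_G^{\downarrow})-E(\rho^{\downarrow})$ by $\beta$, rewrite each energy through the identity $\beta\Tr[\hat H\sigma]=D(\sigma\|\tau_\beta)+S(\sigma)-\ln Z_\beta$ so that the $\ln Z_\beta$ terms cancel, use the unitary invariances $S(\rho^{\downarrow})=S(\rho)$ and $S(\delta_G^{\downarrow}(\rho))=S(\delta(\rho))$, and identify $S(\delta(\rho))-S(\rho)=\mu(\rho)$. The one place where you genuinely diverge is the justification of the crucial energy replacement $E(\rho_G^{\downarrow})=E(\delta_G^{\downarrow}(\rho))$: you deduce it from the closed formula of \cref{v1_010}, namely that the Gaussian-passive energy equals $\tfrac12\sum_j d_j^{\uparrow}(h)\,d_j^{\downarrow}(V(\cdot))$ and hence depends on the state only through its covariance matrix, which $\rho$ and $\delta(\rho)$ share. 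The paper instead argues more elementarily, without invoking the closed formula: for every Gaussian unitary $U$ the conjugated observable $U^{\dagger}\hat H U$ is again quadratic, so its expectation value is the same on $\rho$ and on $\delta(\rho)$ (they have identical first and second moments), and therefore the two minimisations over $\mathcal G$ agree term by term. Your route is shorter given that \cref{v1_010} is already established, but it inherits that theorem's hypothesis that $h$ be strictly positive (harmless here, since $\tau_\beta$ must be normalisable anyway, but worth stating); the paper's argument is self-contained, does not lean on the main theorem, and makes transparent that the equality holds for any quadratic $\hat H$ simply because Gaussian conjugation preserves quadraticity.
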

\begin{proof}
For a generic state $\rho$, the Gaussian gap can be written as
\begin{equation}
 \label{eq:Engapp}   \beta \Delta^{(\hat{H})}(\rho) \coloneqq \beta \big[ \mathcal{E}^{(\hat{H})}(\rho) - \mathcal{E}^{(\hat{H})}_{G}(\rho) \big]  = 
      \beta \Tr[\hat{H} (\rho_G^{\downarrow} - \rho^{\downarrow}) ] = \beta \Tr[\hat{H} (\delta_G^{\downarrow}(\rho) - \rho^{\downarrow}) ].
\end{equation}
In the last equality, we used that
\begin{align} \label{m_32}
    & \Tr[\hat{H} \rho_G^{\downarrow}]  =  \min_{U \in \mathcal{G}} \big\{ \Tr[\hat{H} U \rho U^{\dag}] \big\}  = \min_{U \in \mathcal{G}} \big\{ \Tr[U^{\dag } \hat{H} U \rho ] \big\} =  \min_{U \in \mathcal{G}} \big\{ \Tr[U^{\dag } \hat{H} U \delta(\rho) ] \big\} =  \Tr[\hat{H} \delta_G^{\downarrow}(\rho)],
\end{align}
since the average of the quadratic observable $U^{\dag} \hat{H} U$ on the state $\rho$ is equal to its average over the Gaussian state with the same first and second moments $\delta(\rho)$.
We have

\begin{equation}  \label{eq:2Ds}
\beta \Delta^{(\hat{H})}(\rho) = \beta \Tr[\hat{H} (\delta_G^{\downarrow}(\rho) - \rho^{\downarrow}) ]
 = D(\delta_G^{\downarrow}(\rho)|| \tau_{\beta})  + S(\delta_G^{\downarrow}(\rho)) 
-\big[ D(\rho^{\downarrow} || \tau_{\beta} ) + S(\rho^{\downarrow}) \big].
\end{equation}

To prove the equality above, we expand the definition of relative entropy to obtain: 
\begin{align}
   & D(\delta_G^{\downarrow}(\rho)|| \tau_{\beta}) + S(\delta_G^{\downarrow}(\rho)) = \beta \Tr[\hat{H} \delta_G^{\downarrow}(\rho)] + \ln Z_{\beta}, \\
    &  D(\rho^{\downarrow}|| \tau_{\beta}) + S(\rho^{\downarrow}) = \beta \Tr[\hat{H} \rho^{\downarrow}] + \ln Z_{\beta}.
\end{align}
From~\cref{eq:2Ds}, using the identities $ S(\rho^{\downarrow}) = S(\rho) $, $S(\delta_G^{\downarrow} (\rho)) = S(\delta(\rho)) $ and ~\cref{eq:mu} we obtain
\begin{equation}  \label{eq:2Ds2}
\beta \Delta^{(\hat{H})}(\rho) 
= \mu(\rho) + D(\delta_G^{\downarrow}(\rho)|| \tau_{\beta})  
- D(\rho^{\downarrow}|| \tau_{\beta} )\, .
\end{equation}
\end{proof}

\cref{lemma:eqdelrho} provides a complete characterisation of the non-Gaussian work potential in terms of entropic quantities. Since relative entropy and $\mu(\rho)$ are positive-definite functions, the form given in~\cref{eq:2Ds2a} is particularly useful to establish bounds for $\Delta^{(\hat{H})}$. Notably, deriving bounds in terms of functions of Gaussian states and quadratic Hamiltonians is especially valuable, as these quantities are computationally tractable. The following Theorem goes exactly in this direction.

\begin{lemma}[(Bounds for $ \Delta^{(\hat{H})}(\rho)$\label{theo:ent-bounds})]
Let $\rho$ be a generic quantum state, and let $\hat{H}$ be a generic quadratic Hamiltonian. Then, the non-Gaussian work potential can be bounded as
\begin{equation}
\mu(\rho) -  D(\rho^{\downarrow}|| \tau_{\beta} ) \le \beta \Delta^{(\hat{H})}(\rho) \leq S(\delta(\rho))-S(\delta(\rho^{\downarrow}))+ D(\delta_G^{\downarrow}(\rho)|| \tau_{\beta}) \label{eq:theoD}
\end{equation} 
where $\beta$ is an inverse temperature (that can be chosen freely), $\delta^{\downarrow}_G(\rho)$ is the Gaussian-passive state associated with $\delta(\rho)$, $\delta(\rho^{\downarrow})$ is the Gaussianification of the passive state $\rho^{\downarrow}$. 
\end{lemma}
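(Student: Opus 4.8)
The plan is to deduce both inequalities directly from the exact identity established in \cref{lemma:eqdelrho}, namely
\begin{equation}
\beta \Delta^{(\hat{H})}(\rho)
= \mu(\rho) + D(\delta_G^{\downarrow}(\rho)|| \tau_{\beta})
- D(\rho^{\downarrow}|| \tau_{\beta} )\,,
\end{equation}
so that no fresh computation with symplectic spectra is needed. Everything reduces to elementary manipulations of entropies, the non-negativity of the relative entropy, and the observation that $\tau_\beta$ is itself a Gaussian state.

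For the lower bound I would simply discard the summand $D(\delta_G^{\downarrow}(\rho)|| \tau_{\beta})$, which is non-negative since relative entropy is always $\geq 0$. This immediately yields $\beta \Delta^{(\hat{H})}(\rho) \geq \mu(\rho) - D(\rho^{\downarrow}|| \tau_{\beta})$, matching the left inequality in~\cref{eq:theoD}.

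For the upper bound I would cancel the common term $D(\delta_G^{\downarrow}(\rho)|| \tau_{\beta})$ appearing on both sides, reducing the claim to $\mu(\rho) - D(\rho^{\downarrow}|| \tau_{\beta}) \leq S(\delta(\rho)) - S(\delta(\rho^{\downarrow}))$. I would then substitute the defining relation $\mu(\rho) = S(\delta(\rho)) - S(\rho)$ from~\cref{eq:mu} and cancel the common $S(\delta(\rho))$; using that the passive state is reached by a (Gaussian) unitary, so that $S(\rho) = S(\rho^{\downarrow})$, the inequality collapses to
\begin{equation}
S(\delta(\rho^{\downarrow})) - S(\rho^{\downarrow}) \;\leq\; D(\rho^{\downarrow}|| \tau_{\beta})\,,
\end{equation}
whose left-hand side is precisely $\mu(\rho^{\downarrow})$ by the definition in~\cref{eq:mu}.

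The crux of the argument --- and the only step that is more than bookkeeping --- is recognising that $\mu(\rho^{\downarrow}) \leq D(\rho^{\downarrow}|| \tau_{\beta})$. This holds because $\mu(\rho^{\downarrow}) = \inf_{\sigma \in \mathfrak{G}} D(\rho^{\downarrow} || \sigma)$ is an infimum over \emph{all} Gaussian states, and $\tau_{\beta}$, being the Gibbs state of a quadratic Hamiltonian, is itself Gaussian and hence a feasible competitor in that infimum. I expect this to be the main (if mild) obstacle: one must keep track of which states belong to $\mathfrak{G}$ and confirm that the freely chosen $\tau_\beta$ genuinely lies in it, since without this membership the upper bound would fail to close.
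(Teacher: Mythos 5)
Your proof is correct and takes essentially the same approach as the paper: both arguments start from the identity of \cref{lemma:eqdelrho}, get the lower bound by discarding the non-negative term $D(\delta_G^{\downarrow}(\rho)|| \tau_{\beta})$, and get the upper bound from the key observation that $\tau_\beta$ is itself Gaussian, so $D(\rho^{\downarrow}|| \tau_{\beta}) \geq \mu(\rho^{\downarrow}) = S(\delta(\rho^{\downarrow})) - S(\rho^{\downarrow})$, combined with $S(\rho)=S(\rho^{\downarrow})$ and $\mu(\rho)=S(\delta(\rho))-S(\rho)$. The only difference is one of presentation: you reduce the claimed upper bound to that key inequality, while the paper starts from the inequality and substitutes it into the identity.
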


\begin{proof} 
From~\cref{lemma:eqdelrho} we can obtain the upper bound in the following way. Let us first notice that $\tau_\beta$ is a Gaussian state, so it follows that 
\bb 
D(\rho^{\downarrow}||\tau_B) \ge \min_{\sigma\in\mathfrak{G}} D(\rho^{\downarrow}||\sigma) = \mu(\rho^{\downarrow}).
\ee 
By replacing this result in~\cref{eq:2Ds2} we obtain the upper bound
\bb 
\beta \Delta^{(\hat{H})}(\rho) \le S(\delta(\rho)) - S(\delta(\rho^{\downarrow}))+ D(\delta_G^{\downarrow}(\rho)|| \tau_{\beta}),
\ee 
where we used that $\mu(\rho)=S(\delta(\rho))-S(\rho)$. 
In addition, using the positivity of the relative entropy in~\cref{eq:2Ds2}, one we can obtain 
\bb \label{eq:otherbound}
\mu(\rho) -  D(\rho^{\downarrow}|| \tau_{\beta} ) \le\beta \Delta^{(\hat{H})}(\rho)\le 
 \mu(\rho) + D(\delta_G^{\downarrow}(\rho)|| \tau_{\beta}),
\ee 
from which the lower bound in \cref{eq:theoD} is derived.
Notice that the upper bound in \cref{eq:otherbound} is instead always looser than the one provided in \cref{eq:theoD} since $S(\delta(\rho^{\downarrow})) \geq S(\rho^{\downarrow}) = S(\rho)$.
\end{proof}

From the above Lemma we can obtain a bound
on the total ergotropy in terms of functions that quantify two key resources in the work extraction process: non-Gaussianity (that is, given in terms of the function $\mu(\rho)$) and extractable work under unitaries (quantified by $\mathcal{E}^{\hat{H}}(\rho)$).
\begin{cor} [(Bound on the total ergotropy)] \label{cor:totb}
Let $\rho$ be a generic quantum state with quadratic Hamiltonian $\hat{H}$. The total ergotropy defined in~\cref{m_28} satisfies
\begin{align} \label{a_55}
    & \mathcal E^{(\hat{H})}_\text{tot} (\rho) \geq \frac{1}{\beta^*} \mu(\rho) + \mathcal E^{(\hat{H})}_G (\rho),
\end{align}
where $\beta^*$ is the intrinsic inverse temperature.
The bound is saturated if and only if $\rho^{\downarrow} = \tau_{\beta}$ for some $\beta \geq 0$. 
In that case $\beta = \beta^*$, $\mu = 0 $, and $\mathcal E^{(\hat{H})}_\text{tot} (\rho) = \mathcal E^{(\hat{H})}_G $.
\end{cor}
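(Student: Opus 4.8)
The plan is to combine the exact entropic identity of \cref{lemma:eqdelrho} with the bound-ergotropy formula, so that both the inequality and its equality case fall out of the positivity and faithfulness of the relative entropy.

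First I would collect the relevant decompositions. Additivity (\cref{m_22} and its corollary \cref{eq:closure}) gives $\mathcal{E}^{(\hat{H})}_\text{G,tot}(\rho)=\mathcal{E}^{(\hat{H})}_G(\rho)$, while the definitions of the non-Gaussian work potential and of the bound ergotropy (\cref{eq:NGWP,eq:bounderg}) give $\mathcal{E}^{(\hat{H})}_\text{tot}(\rho)-\mathcal{E}^{(\hat{H})}_G(\rho)=\Delta^{(\hat{H})}(\rho)+\frac{1}{\beta^*}D(\rho^{\downarrow}||\tau_{\beta^*})$. Evaluating \cref{lemma:eqdelrho} at the choice $\beta=\beta^*$ yields $\beta^*\Delta^{(\hat{H})}(\rho)=\mu(\rho)+D(\delta_G^{\downarrow}(\rho)||\tau_{\beta^*})-D(\rho^{\downarrow}||\tau_{\beta^*})$. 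Substituting the latter into the former makes the two $D(\rho^{\downarrow}||\tau_{\beta^*})$ terms cancel, leaving the exact identity
\bb
  \mathcal{E}^{(\hat{H})}_\text{tot}(\rho)-\mathcal{E}^{(\hat{H})}_G(\rho)=\frac{1}{\beta^*}\mu(\rho)+\frac{1}{\beta^*}D\!\left(\delta_G^{\downarrow}(\rho)||\tau_{\beta^*}\right).
\ee
Since the relative entropy is non-negative, this is exactly \cref{a_55}. (Alternatively, one can simply drop $D(\delta_G^{\downarrow}(\rho)||\tau_{\beta^*})\ge 0$ in the lower bound of \cref{theo:ent-bounds} taken at $\beta=\beta^*$, the bound-ergotropy term absorbing the $-D(\rho^{\downarrow}||\tau_{\beta^*})$ contribution.)

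For the equality case I would invoke faithfulness: the bound is saturated iff $\delta_G^{\downarrow}(\rho)=\tau_{\beta^*}$. As both states are Gaussian, this is equality of their first and second moments; reading the moments of $\delta_G^{\downarrow}(\rho)$ off \cref{v1_010} and those of $\tau_{\beta^*}$ off \cref{m_36,m_37}, it becomes $d_j^{\downarrow}(V(\rho))=\coth\!\left(\beta^* d_j^{\uparrow}(h)/2\right)$ for all $j$. Equal symplectic spectra give $S(\delta(\rho))=S(\tau_{\beta^*})$, which together with the defining relation $S(\tau_{\beta^*})=S(\rho)$ forces $\mu(\rho)=S(\delta(\rho))-S(\rho)=0$; hence $\rho$ is Gaussian and $\rho_G^{\downarrow}=\delta_G^{\downarrow}(\rho)=\tau_{\beta^*}$. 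Finally, from $E(\rho_G^{\downarrow})=E(\tau_{\beta^*})$ and the sandwich $E(\tau_{\beta^*})\le E(\rho^{\downarrow})\le E(\rho_G^{\downarrow})$ (the left inequality from $\mathcal{B}^{(\hat{H})}\ge0$, the right because general unitaries dominate Gaussian ones) I would deduce $E(\rho^{\downarrow})=E(\tau_{\beta^*})$, and uniqueness of the Gibbs state as the minimum-energy state at fixed entropy then gives $\rho^{\downarrow}=\tau_{\beta^*}$; entropy matching with monotonicity of $\beta\mapsto S(\tau_\beta)$ fixes $\beta=\beta^*$.

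The inequality is immediate; the equality case is the real work. The first delicate point is extracting $\mu(\rho)=0$ from $\delta_G^{\downarrow}(\rho)=\tau_{\beta^*}$: this is where an operator identity has to be converted into the symplectic-eigenvalue matching and then into the entropy comparison that squeezes out all non-Gaussianity. The second is the passage from the Gaussian-passive state $\rho_G^{\downarrow}$ to the genuine passive state $\rho^{\downarrow}$, which needs the (standard but, for an unbounded quadratic $\hat{H}$ with $h>0$, nontrivial to state carefully) fact that the Gibbs state is the unique minimiser of energy at fixed entropy. I would also be careful with the converse: for a general $\rho$ the condition $\rho^{\downarrow}=\tau_\beta$ constrains only the spectrum of $\rho$, so one must use that saturation has already forced $\rho$ to be Gaussian for the biconditional---together with the stated consequence $\mu=0$---to hold.
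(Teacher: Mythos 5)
Your derivation of the inequality in \cref{a_55} is correct and is in substance the same as the paper's: the paper plugs $\beta=\beta^*$ into the lower bound of \cref{eq:otherbound} and identifies $D(\rho^{\downarrow}||\tau_{\beta^*})=\beta^*\mathcal{B}^{(\hat H)}(\rho)$, which is precisely your identity
\begin{equation*}
\mathcal{E}^{(\hat{H})}_\text{tot}(\rho)-\mathcal{E}^{(\hat{H})}_G(\rho)=\frac{1}{\beta^*}\,\mu(\rho)+\frac{1}{\beta^*}\,D\big(\delta_G^{\downarrow}(\rho)||\tau_{\beta^*}\big)
\end{equation*}
with the non-negative relative-entropy term dropped (indeed, \cref{eq:otherbound} is itself obtained from \cref{eq:2Ds2} by discarding that same term). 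Where you genuinely go beyond the paper is the equality case: the paper's proof establishes only the inequality and is completely silent on the saturation claim. Your forward argument is sound: saturation forces $\delta_G^{\downarrow}(\rho)=\tau_{\beta^*}$ by faithfulness, hence equal symplectic spectra, hence $S(\delta(\rho))=S(\tau_{\beta^*})=S(\rho)$ and $\mu(\rho)=0$, so $\rho$ is Gaussian and $\rho_G^{\downarrow}=\delta_G^{\downarrow}(\rho)=\tau_{\beta^*}$; at that point you can even bypass the energy sandwich, since $\rho$ then has the spectrum of $\tau_{\beta^*}$ and the passive rearrangement of a Gibbs spectrum is the Gibbs state itself, giving $\rho^{\downarrow}=\tau_{\beta^*}$ and $\beta=\beta^*$ directly.

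Your caveat about the converse is not mere caution: as literally stated, the ``if'' direction of the corollary is false, and your identity shows why. The condition $\rho^{\downarrow}=\tau_{\beta}$ constrains only the spectrum of $\rho$. Concretely, take one mode with $h=\mathbb{1}$, $\mathbf r=0$, and let $\rho=p_1\ketbra{0}+p_0\ketbra{1}+\sum_{n\ge 2}p_n\ketbra{n}$, where $p_n$ are the Fock populations of $\tau_{\beta}$. This state is non-Gaussian, yet it has the same spectrum as $\tau_\beta$, so $\rho^{\downarrow}=\tau_{\beta}$ and $\beta^*=\beta$; but $\langle \hat n\rangle_{\rho}>\langle \hat n\rangle_{\tau_{\beta}}$ gives $d^{\downarrow}(V(\rho))=2\langle\hat n\rangle_{\rho}+1>\coth(\beta^*/2)$, hence $D(\delta_G^{\downarrow}(\rho)||\tau_{\beta^*})>0$ and the bound is strict, while $\mu(\rho)>0$ contradicts the corollary's ``in that case $\mu=0$''. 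The correct characterisation, which falls out of your identity, is: saturation holds if and only if $\delta_G^{\downarrow}(\rho)=\tau_{\beta^*}$, equivalently if and only if $d_j^{\downarrow}(V(\rho))=\coth\big(\beta^* d_j^{\uparrow}(h)/2\big)$ for all $j$; this spectral condition already forces $\mu(\rho)=0$ (so $\rho$ is Gaussian) and $\rho^{\downarrow}=\tau_{\beta^*}$. In short, Gaussianity must be part of the saturation condition rather than a consequence of $\rho^{\downarrow}=\tau_{\beta}$ alone, and on this point the paper's proof offers no guidance since it omits the equality analysis entirely.
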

\begin{proof}
    If we choose $\beta= \beta^*$ inside~\cref{eq:otherbound}
we have
\begin{equation}
     \mu(\rho) - \beta^* \mathcal{B}^{(\hat H)}(\rho) \leq \beta^{*} \Delta^{(\hat{H})}(\rho) \leq \mu(\rho) + D(\delta_G(\rho)|| \tau_{\beta^*})
\end{equation}
and the left inequality becomes, using the definition of bound ergotropy
\begin{align} \label{a_55}
    & \mathcal E^{(\hat{H})}_\text{tot} (\rho) \geq \frac{1}{\beta^*} \mu(\rho) + \mathcal E_G (\rho),
\end{align}
that is the main statement of~\cref{cor:totb}.
\end{proof}

\section{Minimum energy at the output of a Gaussian channel}\label{sec:channels}
In this section, we answer the following simple-looking question: what is the minimum energy at the output of a Gaussian channel?

\begin{thm}[(Minimum energy at the output of a Gaussian channel)]
    Let $\Phi_{X,Y,\textbf{x}}$ be a Gaussian channel as in~\cref{char_gauss_ch}, and let us assume that $X$ is invertible. Let $\hat{H}$ be a quadratic Hamiltonian of the form
    \bb
        \hat{H}\coloneqq \frac12(\hat{\textbf{R}}-\textbf{r})^\intercal h (\hat{\textbf{R}}-\textbf{r})\,,
    \ee
    where $\textbf{r}\in\mathbb{R}^{2n}$, and $h\in\mathbb{R}^{2n\times 2n}$ is a strictly positive matrix. Let $d_1(X^\intercal h X),\ldots, d_n(X^\intercal h X)$ be the symplectic eigenvalues of $X^\intercal h X$. Then, the minimum energy at the output of $\Phi_{X,Y,\textbf{x}}$ is given by
    \bb
        \min_{\rho}\Tr[\hat{H}\Phi_{X,Y,\textbf{x}}(\rho)]=\frac12
        \sum_{i=1}^nd_i(X^\intercal h X)+\frac14\Tr[hY]\,,
    \ee
    where the minimum is taken over all the input states $\rho$. Moreover, a state $\rho$ achieving such a minimum is the Gaussian state with first moment and covariance matrix 
    \bb
         \textbf{m}(\rho)&= X^{-1}(\textbf{r}-\textbf{x})\,,\\
         V(\rho)&= (S^{-1})^\intercal S^{-1}\,,
    \ee
    where $S$ is a symplectic matrix that puts $X^\intercal h X$ in Williamson's decomposition (i.e.~$X^\intercal h X=S DS^\intercal$).
\end{thm}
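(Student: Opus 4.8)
The plan is to reduce the optimisation over all input states to a finite-dimensional optimisation over covariance matrices, using \cref{lem_energy}. Since the output energy $\Tr[\hat H\Phi_{X,Y,\mathbf x}(\rho)]$ depends on $\rho$ only through $\mathbf m(\rho)$ and $V(\rho)$, and since every pair (first moment, valid covariance matrix) is realised by some Gaussian state, minimising over all $\rho$ is the same as minimising over all such pairs. Applying \cref{lem_energy} to the output state and inserting the channel action $\mathbf m\mapsto X\mathbf m(\rho)+\mathbf x$, $V\mapsto XV(\rho)X^\intercal+Y$ from \cref{char_gauss_ch}, I would write
\[
\Tr[\hat H\Phi_{X,Y,\mathbf x}(\rho)]=\tfrac14\Tr[X^\intercal hX\,V(\rho)]+\tfrac14\Tr[hY]+\tfrac12\big(X\mathbf m(\rho)+\mathbf x-\mathbf r\big)^\intercal h\big(X\mathbf m(\rho)+\mathbf x-\mathbf r\big),
\]
using cyclicity of the trace for the first term. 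The middle term is a $\rho$-independent constant, so only the other two need to be optimised.

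First I would dispose of the first-moment term. Since $h$ is strictly positive, the quadratic form is non-negative and vanishes exactly when $X\mathbf m(\rho)+\mathbf x=\mathbf r$; invertibility of $X$ lets me solve this as $\mathbf m(\rho)=X^{-1}(\mathbf r-\mathbf x)$, which is attainable and sets the term to $0$. It then remains to minimise $\tfrac14\Tr[\tilde h\,V(\rho)]$ over valid covariance matrices, where $\tilde h\coloneqq X^\intercal hX$ is strictly positive because $h>0$ and $X$ is invertible.

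For this I would use Williamson's decomposition $\tilde h=SDS^\intercal$ with $D=\bigoplus_j d_j(\tilde h)\mathbb 1_2$. Setting $W\coloneqq S^\intercal V(\rho)S$ and using cyclicity gives $\Tr[\tilde h V(\rho)]=\Tr[DW]$. The key point is that $W$ ranges over all valid covariance matrices as $V(\rho)$ does: from $S\Omega S^\intercal=\Omega$ one obtains $\Omega W=S^{-1}(\Omega V(\rho))S$, so $\Omega W$ and $\Omega V(\rho)$ are similar and share their symplectic eigenvalues; hence $W+i\Omega\ge0$ if and only if $V(\rho)+i\Omega\ge0$.

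The core estimate is to show $\min_W\Tr[DW]=\Tr[D]=2\sum_j d_j(\tilde h)$ over valid covariance matrices $W$. Since $D$ is block-diagonal, $\Tr[DW]=\sum_j d_j(\tilde h)\,\Tr[W_{(j)}]$ depends only on the $2\times2$ diagonal blocks $W_{(j)}$ of $W$. I would invoke the standard fact that a principal submatrix of a valid covariance matrix, being the covariance matrix of the corresponding reduced state, is itself a valid single-mode covariance matrix, so $\det W_{(j)}\ge1$; the AM--GM inequality then gives $\Tr[W_{(j)}]\ge2\sqrt{\det W_{(j)}}\ge2$. Summing yields $\Tr[DW]\ge2\sum_j d_j(\tilde h)$, with equality at $W=\mathbb 1$. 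Translating back through $W=S^\intercal V(\rho)S=\mathbb 1$ gives the optimal $V(\rho)=(S^{-1})^\intercal S^{-1}$, which is a valid pure-state covariance matrix since $(S^{-1})^\intercal$ is symplectic and thus $V(\rho)$ is a symplectic congruence of $\mathbb 1$. Assembling the three contributions reproduces the claimed value $\tfrac12\sum_i d_i(X^\intercal hX)+\tfrac14\Tr[hY]$ together with the stated optimiser. The main obstacle is this core estimate: the essential observations are that the block-diagonal structure of $D$ decouples the modes and that the diagonal blocks inherit the uncertainty relation, after which everything reduces to the elementary single-mode AM--GM bound.
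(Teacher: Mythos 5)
Your proof is correct, and it takes a genuinely different route from the paper's at the core step. Both arguments agree up to the reduction: apply \cref{lem_energy} and \cref{char_gauss_ch} to obtain the three-term expression, kill the first-moment term by choosing $\mathbf{m}(\rho)=X^{-1}(\mathbf{r}-\mathbf{x})$ (using $h>0$ and invertibility of $X$), and reduce to minimising $\tfrac14\Tr[X^\intercal h X\, V]$ over valid covariance matrices. From there the paper restricts the minimisation to pure-state covariance matrices $V=SS^\intercal$ with $S$ symplectic, rewrites the problem as $\min_{S\in\mathrm{Sp}(2n)}\Tr[S^\intercal D(X^\intercal hX)S]$, and invokes \cref{lem_min_s} --- i.e.\ the same machinery (Birkhoff's theorem together with the symplectic-to-doubly-stochastic result of~\cite{10.1063/1.4935852}) that underpins the Gaussian ergotropy formula of \cref{v1_010}. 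You instead absorb the Williamson factor of $\tilde h = X^\intercal hX$ into the covariance matrix via the congruence $W=S^\intercal V S$, note that this maps the set of valid covariance matrices onto itself (your similarity argument for $\Omega W$ works; even more directly, $W+i\Omega=S^\intercal(V+i\Omega)S$ because $S^\intercal$ is itself symplectic), and then argue mode by mode: each $2\times2$ diagonal block $W_{(j)}$ of a covariance matrix is a valid single-mode covariance matrix, so $\det W_{(j)}\ge 1$ and $\Tr[W_{(j)}]\ge 2\sqrt{\det W_{(j)}}\ge 2$ by AM--GM, with equality at $W=\mathbb{1}$, which translates back to the stated optimiser $V=(S^{-1})^\intercal S^{-1}$. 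Your route is more elementary and self-contained: it bypasses \cref{lem_min_s} entirely, which is possible here (but not in \cref{v1_010}) because the minimisation runs over \emph{all} covariance matrices rather than over the symplectic orbit of a fixed symplectic spectrum, so no rearrangement or doubly-stochastic argument is needed. What the paper's route buys in exchange is economy within the manuscript: it recycles the lemma already proved for the main theorem, making the channel result an almost immediate corollary.
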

\begin{proof}
    It holds that
    \bb\label{eq_ters}
        \Tr[\hat{H}\Phi_{X,Y,\textbf{x}}(\rho)]&\eqt{(i)}\frac{1}{4}\Tr[hV(\Phi_{X,Y,\textbf{x}}(\rho))]+ \frac12\left( \textbf{m}(\Phi_{X,Y,\textbf{x}}(\rho))-\textbf{r}\right)^\intercal h \left( \textbf{m}(\Phi_{X,Y,\textbf{x}}(\rho))-\textbf{r}\right)&\\
        &\eqt{(ii)}\frac14\Tr[hY]+\frac{1}{4}\Tr[hXV(\rho)X^\intercal]+ \frac12\left( X\textbf{m}(\rho)+\textbf{x}-\textbf{r}\right)^\intercal h \left( X\textbf{m}(\rho)+\textbf{x}-\textbf{r}\right)\,,
    \ee
    where in (i) we employed~\cref{lem_energy} and in (ii) we used~\cref{char_gauss_ch}. One can easily show that the last term in~\cref{eq_ters} is minimised --- and vanishes --- by choosing 
    \bb
        \textbf{m}(\rho)= X^{-1}\left( \textbf{r}-\textbf{x} \right)\,.
    \ee
    Hence, we only need to minimise 
    \bb 
        V\longmapsto\Tr[h XVX^\intercal]
    \ee
    with respect to a covariance matrix $V$. By employing the Williamson's decomposition and the fact that the symplectic eigenvalues of a covariance matrix are always larger than one, it is simple to show that it suffices to minimise just over those covariance matrices of the form $V=SS^\intercal$, where $S$ is symplectic. Consequently, we need to evaluate the following minimisation problem:
    \bb 
        \min_{S\in\mathrm{Sp}(2n)}\Tr[ S^\intercal X^\intercal h XS ]\,,
    \ee
    where the minimisation is over the set $\mathrm{Sp}(2n)$ of all symplectic matrices. Note that since $h$ is strictly poisitive and since $X$ is invertible, it follows that $X^\intercal hX$ is strictly positive. Hence, we can apply the Williamson decomposition of $X^\intercal h X$ to show that
        \bb 
        \min_{S\in\mathrm{Sp}(2n)}\Tr[ S^\intercal X^\intercal h XS ]= \min_{S\in\mathrm{Sp}(2n)}\Tr[ S^\intercal D(X^\intercal h X)S ]\,,
    \ee
    where we denoted as $D(X^\intercal h X)$ the matrix of symplectic eigenvalues of $X^\intercal h X$. Finally, by employing~\cref{lem_min_s}, we conclude that
    \bb 
        \min_{S\in\mathrm{Sp}(2n)}\Tr[ S^\intercal D(X^\intercal h X)S ]=\Tr[ D(X^\intercal h X) ]=2\sum_{i=1}^nd_i(X^\intercal h X)\,,
    \ee
    where $d_1(X^\intercal h X),\ldots, d_n(X^\intercal h X)$ denote the symplectic eigenvalues of $X^\intercal h X$.
\end{proof}

\section{Conclusion}
\label{sec:conclusions}
In this work, we established several results on the thermodynamics of continuous variable systems. 
We gave a complete characterization of the Gaussian unitaries extracting the maximum energy from any given quantum state, yielding an analytical formula for Gaussian ergotropy in terms of the covariance matrix of the state and Hamiltonian.
Among the other results, we proved that Gaussian ergotropy is additive, i.e.~entangling Gaussian operations do not have any advantage over their local counterparts when many copies of the state are available.
Furthermore, we established new entropic bounds on the extractable energy under Gaussian operations and examined the problem of minimizing energy at the output of a Gaussian channel.

\begin{acknowledgments}
SFEO, VC and VG acknowledge financial support by MUR (Ministero dell’Università e della Ricerca) through the PNRR MUR project PE0000023-NQSTI.
\end{acknowledgments}

\textit{Note added:}
During the completion of this manuscript, we became aware of a closely related work~\cite{updike2025minimizingphasespaceenergies}, which studies and solves the problem of classical ergotropy for quadratic Hamiltonians, deriving \cref{lem_min_s} through an alternative approach.

\bibliographystyle{apsrev4-1}
\nocite{apsrev41Control}
\bibliography{biblio,revtex-custom}
\end{document}